\newcommand{\Prob}{\text{{\rm Pr}}}
\newcommand{\eps}{\epsilon}
\newtheorem{theorem}{Theorem}[section]
\newtheorem{lemma}[theorem]{Lemma}
\newtheorem{remark}[theorem]{Remark}
\newtheorem{corollary}[theorem]{Corollary}
\newtheorem{claim}[theorem]{Claim}
\def\cO{{\cal B}}
\newcommand{\cs}{\mbox{\cal S}}
\newcommand{\FullOrShort}{full}
	  \newcommand{\fullOnly}[1]{#1}
	  \newcommand{\shortOnly}[1]{}
	  \newcommand{\fullOnly}[1]{}
	  \newcommand{\shortOnly}[1]{#1}
\begin{document}

\date{}
\author{Ofer Feinerman\\ \texttt{ofer.feinerman@weizmann.ac.il}\\ Weizmann Institute}
\author{Bernhard Haeupler\\ \texttt{haeupler@mit.edu}\\ MIT}
\author{Amos Korman\\ \texttt{amos.korman@gmail.com}\\ CNRS}

\parskip0.2cm

\title{Breathe before Speaking:\\ Efficient Information Dissemination\\ despite Noisy, Limited and Anonymous Communication}

\author{
 Ofer Feinerman
\thanks{The Shlomo and Michla Tomarin Career Development Chair, The Weizmann Institute of Science, Rehovot, Israel. E-mail: {\em  ofer.feinerman@weizmann.ac.il}.
Supported in part by the Clore Foundation, the Israel Science
Foundation (FIRST grant no. 1694/10) and the Minerva Foundation.}
\and
Bernhard Haeupler  
	\thanks{Carnegie Mellon University. E-mail: {\em haeupler@cs.cmu.edu}. Supported in part by the NSF grant XXXX.}
 \and
 Amos Korman
\thanks{Contact author. CNRS and Univ. Paris Diderot, Paris, 75013, France.  E-mail: {\em amos.korman@liafa.univ-paris-diderot.fr.}  Supported in part by the ANR project DISPLEXITY, and by the INRIA project GANG. 
This work has received funding from the European Research Council (ERC) under the European Union's Horizon 2020 research and innovation program (grant agreement No 648032).} }

\begin{titlepage}
\maketitle

\date{}

\def\thefootnote{\fnsymbol{footnote}}

\thispagestyle{empty}

\begin{abstract}
Distributed computing models typically assume reliable communication between processors. While such assumptions often hold for engineered networks, e.g., due to underlying error correction protocols, their relevance to biological systems, wherein messages are often distorted before reaching their destination, is quite limited. In this study we take a first step towards reducing this gap by rigorously analyzing a model of communication in large anonymous populations composed of simple agents which interact through short and highly unreliable messages.

We focus on the broadcast problem and the majority-consensus problem. Both are fundamental information dissemination problems in distributed computing, in which the goal of agents is to converge to some prescribed desired opinion. We initiate the study of these problems in the presence of communication noise. Our model for communication is extremely weak and follows the push gossip communication paradigm: In each  round each agent that wishes to send information delivers a message to a random anonymous agent. This communication is further restricted to contain only one bit (essentially representing an opinion). Lastly, the  system is assumed to be so noisy that the bit in each message sent is flipped independently with probability $1/2-\epsilon$, for some small $\epsilon >0$.

Even in this severely restricted, stochastic and noisy setting we give natural protocols that solve the noisy broadcast  and the noisy majority-consensus  problems efficiently. Our protocols run in $O(\log n / \epsilon^2)$ rounds and use  $O(n \log n / \epsilon^2)$ messages/bits in total, where $n$ is the number of agents. These bounds are asymptotically optimal and, in fact, are as fast and message efficient as if each agent would have been simultaneously informed directly by an agent that knows the prescribed desired opinion. Our efficient, robust, and simple algorithms suggest balancing between silence and transmission, synchronization, and majority-based decisions as important ingredients towards understanding collective communication schemes in anonymous and noisy populations.
\end{abstract}

%
%



\end{titlepage}
\section{Introduction}
\subsection{Background and motivation}
Information theory originated as a search for methods to manage communication noise in engineered systems \cite{Shannon}.  In many ways, this search has reached its goals. The existence of coding methods that reduce error rates to practically zero were proven to exist \cite{Shannon}. Not less important,  such codes have been realized in a myriad of real-world systems \cite{Codes}.
 In other words, given a large enough bandwidth, one can encode a message with a large number of error correcting bits in a way that makes communication noise essentially a non-issue. It is perhaps for this reason that fault-tolerance studies in distributed computing 
have somewhat neglected the issue of noise in communication. Indeed, such studies  
 focus either on weak faults such as node {\em crashes} and message {\em failures}, or on very strong faults modeled as adversarial ({\em Byzantine}) interventions, but messages that are transmitted from one processor are, typically, assumed to reach their destination without distortion.

In contrast, communication in the natural world is inherently noisy. Biology, for one, is replete with communicating ensembles on all levels of organization: from molecules (e.g., the immune complement system \cite{Complement}), and cells (e.g., bacterial populations \cite{QurumSensing}) to societies (e.g., a superorganism of social insects \cite{SuperOrganism}). Whereas it is unrealistic to assume adversarial interventions, biological signals are extremely vulnerable to random distortion as they are being generated (e.g., probabilistic vesicle release in neuronal synapses \cite{Synapses}), transmitted over noisy media (e.g., acoustic communication in noisy environments \cite{AcousticNoise}) and received (e.g., non-reliable measurements taken by immune cells \cite{TCRnoise}.)  Nevertheless, many studies show that, in practice, biological ensembles  function reliably despite communication noise \cite{Tregs,noisyAnts}.

How biological systems overcome communication noise is a very  basic and intriguing question. Indeed, for systems composed of simple and restricted individuals,  as is often the case in biology, it may not be reasonable to assume sophisticated error-correcting at the level of an individual channel. Furthermore, when message size is highly restricted, redundancy  drastically reduces the available alphabet and hence could not be used extensively. On the other hand, with only little redundancy, a random fault in the content of a transmitted message may lead to the reception of a meaningful message that is  inconsistent with the original one \cite{Penguins}.

Our work is a first attempt to rigorously study the impact of communication noise on performing distributed information dissemination tasks\footnote{Network information theory \cite{NetworkIT} discusses the problem of disseminating information from one or more sources to a large number of recipients over noisy information channels. The settings there are, however, different from those that interest us as they are non-distributed in nature and allow for complex coding schemes that may be computationally complex for simple agents \cite{KK08}.}.
We consider a basic and simple model of interaction between agents. In the absence of noise in communication, the information dissemination problems discussed here are well understood, and in particular, the broadcast problem can be trivially solved. It turns out, however, that adding noise to the communication, even in a very simple form (e.g., noise is chosen from some given simple distribution and is independent between messages), significantly complicates the situation. Indeed, our main efforts in this paper are  devoted to understand the difficulties incurred by adding the noise. 

At this point, we would like to stress that although our model is inspired by biological systems, we do not claim that it fully represents any particular biological system. Rather, the model we consider is highly abstract, aiming to capture a fundamental phenomena that (very loosely) relates to many biological systems. We believe, however, that the results of this preliminary paper can be useful for further research, that will focus on more concrete biological settings.

\subsection{Context and related work}

Our paper falls within the scope of natural algorithms, a
recent attempt to investigate biological phenomena from an
algorithmic perspective \cite{beeping1,Dolev,chazelle,FK12,FKLS12,Mertzios}. Within this framework, many works in the computer science discipline have studied different computational aspects of abstract systems composed of simple and restricted individuals. This includes, in particular, the study of {\em population protocols} \cite{Pop1,Pop2,Pop-ss,Joffroy,spirakis}, which considers individuals with constant memory size interacting in pairs (using constant size messages) in a communication pattern which is either uniformly at random or adversarial, and the {\em beeping} model \cite{beeping1, beeping2,emek}, which assumes  a fixed network with extremely restricted communication. However, despite  interesting results obtained in such models, the understanding of their fault-tolerance aspects is still lacking \cite{Pop-ss,Joffroy}. Here, we study
 basic distributed tasks in a model that includes highly restricted \emph{and} noisy communication.

\paragraph{Broadcast and majority-consensus problems.} 
Disseminating information to all the nodes of a network is one of the most fundamental communication primitives.
In particular, the {\em broadcast} problem, where a single piece of information initially residing at some source node is to be disseminated, and variants of it have received a lot of attention in the literature, see, e.g., \cite{DGH+88,CHHKM12,DF11a,gossip2,gossip1,H11,GP96,HM14,GS12,KSSV00,FG10}. Much of this research  was devoted to bounding measures such as the number of rounds, and the total number of messages. Fault tolerant broadcast algorithms have also been studied extensively, especially in complete networks and in synchronous environments, where the focus has been on weak types of failures such as (probabilistic) message failures and initial node crashes. Essentially, it has been shown that there exist broadcast protocols that can overcome such faults with a relatively little penalty \cite{DF11a,DP00,DHL09,ES09,GP96,H13,KSSV00,HM14}. 

In the {\em majority-consensus} problem processors are required to agree on a common output value which is the majority initial input value~\cite{Aspnes,kutten}. While we look at a generalized version of this problem where only a subset $A$ may hold an opinion initially, most previous works considered the case that all nodes have an initial opinion. Furthermore, similarly to this current work, many previous papers also considered clique networks, where agents contact other agents uniformly at random. For example, the task of majority-consensus was studied in a clique network by Angluin et al. \cite{Aspnes}.  The authors therein  gave an algorithm that uses only three states and converges in $O(\log n)$ rounds. That algorithm is robust under a very small fraction of agents being Byzantine, but is not robust  under communication noise.  We note that for our purposes, we could not use variants of the algorithm in \cite{Aspnes} because it inherently uses \emph{three} symbols in the communication, while we are restricted to only \emph{two} symbols (a single opinion). 
On the other hand, similarly to the method we use in Stage II  of our algorithm, several other papers have solved the majority-consensus based on repeatedly sampling the opinions of few other agents and re-setting the opinion of the observing agent according to the majority of these samples \cite{Becchetti,Cooper,Doerr}. For example,  Doerr et al. \cite{Doerr} considered the algorithm where each agent repeatedly samples the opinions of two other agents uniformly at random then taking the majority over its own and the two sampled opinions (three opinions in total). They show that this algorithm converges with high probability to the majority initial opinion in $O(\log n)$ rounds, provided that at least a $1/2 + \Omega(\sqrt{\log n/ n})$ fraction of the agents agree initially.

It is important to stress that in the theoretical distributed computing discipline, none of the works on broadcast and consensus related problems have considered noise in the communication.

\paragraph{Related work in engineering and physics.}
Broadcast related problems were studied in other contexts as well, often with settings where communication noise is inherent.
 Engineers have studied the related problem of sensor network consensus formation in  the presence of communication noise and have demonstrated, for example, tradeoffs between consensus quality and running time \cite{KM09}.
Physicists have studied the spreading of epidemics \cite{Epidemics} and the formation of consensus  around a zealot in voter models \cite{VoterModel1,VoterModel2} within probabilistic settings that include communication noise.  These physically inspired studies often assume very simple algorithms and analyze their performance - this is different from a computer science approach which focuses on identifying the most efficient algorithms. Indeed, broadcast within a noisy voter model setting is expected to yield  long convergence times, polynomial in the number of agents.

\paragraph{Examples in biology.}
In the biological world, broadcast is a common phenomenon which allows, for example, a single receptor to activate an entire cell \cite{MHC}, a small number of cells to trigger large population responses \cite{Tregs}, or a small number of vigilant individuals to alert their herd \cite{ManyEyes}. There have been several direct experimental demonstrations of reliable broadcast using unreliable messaging in biological systems. Examples include knowledgeable ants informing their nestmates regarding available food \cite{noisyAnts} and precise temporal codes achieved by coordinated neuronal populations~\cite{CorticalSongs}. Such examples serve as motivation for a more thorough theoretical understanding of how rumors spread through groups of simple individuals that communicate by noisy messages.
Majority-consensus  problems have also been shown to be relevant  for several biological systems: Ants choosing between two alternative nesting sites and reach consensus on a nest that attracts a larger number of scouts \cite{HouseHunt} and a group of fish that reach consensus around the larger group of leaders \cite{FishConsensus} being two examples.

\subsection{Model and problems}\label{sub:rumor}

\subsubsection{Problem definitions}
As a first step into the  study of noisy information dissemination, we study a very simple scenario in which there are only two possible states (or {\em opinions}) for the environment, namely, $0$ and $1$, one of which is the {\em correct} opinion, denoted by $\cO$. We study two information dissemination problems both of which consist of $n$ anonymous {\em agents}.

\paragraph{The noisy broadcast problem.} In this problem we start the execution with one designated agent, called the {\em source}  (representing the environment) that  holds the correct opinion $\cO$, while all other $n-1$ agents initially have no information regarding $\cO$. Agents can  propagate information and update their knowledge by using (noisy) interactions as specified below. The goal is that eventually, \emph{with high probability}, all agents adopt $\cO$ as their final opinion. Throughout we denote with high probability any probability of at least $1-1/n^c$, for some sufficiently large constant~$c>2$.

\paragraph{The noisy majority-consensus problem.} In this problem we consider that initially we have a subset  $A$ of agents, each of which has an opinion in $\{0,1\}$ (all other agents do not have an opinion), where $\cO$ is the majority opinion among the agents in $A$. The problem is  parameterized by the extent to which $\cO$ is more common. That is, the {\em majority-bias} of $A$ is defined as  $\frac{1}{2}(A_{\cO}-A_{\bar{\cO}})/|A|$, where $A_{i}$  is the number of agents in the initial opinionated group, $A$, with opinion $i$, for $i\in\{0,1\}$.
As in the noisy broadcast problem, the goal of the agents is to guarantee that with high probability, at the end of the execution, all agents hold the opinion~$\cO$.

\subsubsection{The {\em Flip} model of communication}
We assume a {\em synchronous} setting, in which all agents start the execution simultaneously and communication takes place in discrete rounds\cite{Peleg:book}.
As mentioned, agents can use their (noisy) interactions to inform and update their opinion. In each round, each agent can choose to {\em wait}, i.e, do nothing, or to {\em send} a  message.

The interaction pattern we study follows the standard {\em push} gossip model~\cite{DGH+88,KSSV00,Pit87}, where in each round each agent that chooses to send a message sends this message to another agent, chosen  uniformly at random, without sender or receiver learning about each other's identity. If an agent receives several messages at the same round, it can only accept one of them (chosen uniformly at random), and all other messages are dropped. The message size is extremely restricted, specifically, each message sent consists of a  single bit essentially encoding an opinion. Let  $\eps>0$ be a parameter of the Flip model. All messages  are subject to noise, specifically, for each message sent by an agent, upon receiving it, the bit in the message is {\em flipped}  independently with  probability at most $1/2-\eps$.

\subsubsection{Synchronization}
   Each agent is equipped with a clock that enables it to count rounds. In the standard model,
the clock at an agent is initialized to 0 when the agent is activated (an agent is activated when it receives a message for the first time). We also consider  the {\em fully-synchronous} setting in which   all agents start the execution simultaneously at the same time, or in other words, they all use the same {\em global  clock}, initialized to 0 at the beginning of the execution.

\subsubsection{Symmetric algorithms }
We view the two possible opinions $\{0,1\}$ as abstract symmetric opinions that cannot affect  any decision made by individual agents, except for which message to transmit\footnote{One could view this trait as a consequence of a symmetry of the world, in which an agent can decide if two opinions are the same or not but has no access to their actual values. For example, a flock of birds following a source (e.g.,  a bird that has spotted a predator)  that travels either north or south can do this even in an environment where there is complete symmetry between these two directions. The only demand is that the escape direction of all birds agree with that of the source.}. Accordingly, we consider only {\em symmetric} algorithms, in which
the choices of individuals of whether or not to send a message at a given time are oblivious   of the value of~$\cO$.
That is, when fixing all random bits involved in an execution, the message-pattern (i.e., who sends who and at what time) in symmetric algorithms is the same regardless of whether $\cO$ equals 1 or 0.

\subsection{Lower bounds}\label{immediate-bounds}
The restriction of the symmetric noisy broadcast problem (or the majority-consensus  problem) to two parties is, in some sense, classical for the area of information theory. Here, a source agent $a$ wishes to deliver its bit opinion  $\cO$ to the second agent~$b$ through a binary symmetric channel with  crossover probability $p = 1/2 -\eps$.
The seminal result by Shannon \cite{Shannon} implies that
using the channel $\Theta(1/\eps^2)$ times is both necessary and sufficient, for allowing $b$ to possess the opinion~$\cO$ with sufficiently high constant probability. This immediately implies a $\Theta(1/\eps^2)$ bound for the number of rounds needed for the same confidence guarantee in the  two-party noisy broadcast problem, since each message here  contains precisely one bit. When it comes to a population of $n$ agents, the goal is to have each agent possess the opinion~$\cO$ with high probability (at least $1-1/n^c$). In this case, each agent would individually need to obtain $\Omega(\frac{1}{\eps^2}\log n)$ messages, even if all messages would come directly from the source node.
These bounds immediately imply a lower bound of $\Omega(\frac{1}{\eps^2}n\log n)$  on the total bit complexity and hence also on the total number of messages sent. Moreover, since we assume that an agent can handle at most one message at a time,   we get that $\Omega(\frac{1}{\eps^2}\log n)$ is also a  lower bound on the number of rounds. All these bounds apply even if all messages would be as informative as those originated directly by the source agent. Hence, they apply in much stronger models of communication, such as ones that allow an agent to send messages to multiple destinations at the same round, and ones that consider  non-anonymous populations, where an agent could direct a message to a desired destination. Note that the same arguments hold also for the noisy majority-consensus  problem if the initial subset $A$ of agents is small.
  On the other hand, without interacting with other agents and simply waiting to receive sufficiently many samples from the source agent, the noisy broadcast problem could only be solved in $O(\frac{1}{\eps^2}n\log n)$  rounds.

\subsection{Our results}

Our main result, presented in Section \ref{SEC:GLOBAL},  considers the fully-synchronous setting, where it is assumed that agents start their operation simultaneously at the same time. For this setting we
 present  a randomized  symmetric  algorithm that solves the noisy broadcast problem  in $O(\frac{1}{\eps^2}\log n)$ rounds and uses a total of
$O(\frac{1}{\eps^2}n\log n)$ messages (or bits).
These bounds are both asymptotically optimal and, in fact, are as fast and message efficient as if each agent would have been simultaneously informed by the source directly.
We also show  that the same asymptotically tight bounds  (for the running time and message complexity) hold also for solving the noisy majority-consensus  problem with any initial subset $A$ of agents of size $|A|=\Omega(\frac{1}{\eps^2}\log n)$ and whose majority-bias is $\Omega(\sqrt{\log n/|A|})$.

In Section \ref{removing-global} we show how to remove the global-clock assumption. This modification applies to both algorithms and comes at an additive cost of $O(\log^2 n)$ to the running time, while the message complexity remains the same.

Our results imply that even in  severely restricted, stochastic and noisy settings one can still solve the noisy broadcast  and the noisy majority-consensus  problems efficiently by applying simple protocols. Indeed, our basic algorithms employ very simple rules that can be implemented using restricted memory, specifically, using $O(\log\log n+\log (1/\eps))$ memory bits. Essentially, each agent has some waiting period (in which it does not send any message), and after which  it starts sending its current opinion at each round until the protocol terminates. Furthermore, its opinion is occasionally updated following a majority-type  procedure based on its recently received messages.

\subsection{Insights on the difficulty of the problem}\label{sub:difficulty}
Before we describe our algorithms, let us first highlight some of the complex features of the noisy broadcast problem (the same difficulties arise also in the noisy majority-consensus  problem).
 Consider an agent $a$ that receives  its first message. This agent now has several options for its actions. One  option is to keep silent (wait) until receiving another message. This strategy would result in an algorithm that requires huge amount of time. Indeed,  the first agent that hears two messages must hear both of them from the source (since all other agents are silent), and this would require waiting for $\Omega(\sqrt{n})$ rounds, by the birthday paradox. Another possible action for such an agent is to immediately forward the message it just received to others. This strategy
would result in the typical agent hearing  a very unreliable message for the first time. That is, the number of intermediate agents on the path between the source and the typical agent would be  roughly $\log n$. Now, each time the message passes from an agent to an agent, the probability of preserving the original  opinion   drastically reduces. Specifically, it is not difficult to show that a message following a path of size $c$ is correct with probability at most $1/2+(2\eps)^{c}$. This means that if $\eps$ is small, the probability that a typical agent receives the correct opinion on the first message it hears is at most $1/2+1/n$. If this is the case with all agents, it seems, again, almost impossible to recover and reconstruct the correct opinion  $\cO$.

Another difficulty in the strategy of immediately forwarding messages, is that the execution seems to be dependent on the quality of the first messages to be received directly from the source, and these messages can be corrupted with non-negligible probability.  Indeed, in the beginning of the execution, the pattern of meeting looks like a tree, rooted at the source agent. Moreover, the collection of subtrees hanging down from the children of the root (the agents directly informed by the source agent) do not have the same
size, as the subtrees hanging down from the first informed children of the root grow much faster and dominate the population. Hence, the initial opinion of agents could not be more reliable than the initial opinions of the roots of the corresponding subtrees. At this point, with non-negligible probability, the majority of agents would have obtained the wrong opinion, from which it seems again almost impossible to recover.

To overcome these difficulties, we use a third option for the behavior of an agent, allowing it to wait for a prescribed number of rounds before sending a message. For doing so, we rely  on synchronization, which we use to balance
the sizes of the aforementioned subtrees and, therefore, constrain the deterioration of reliability.

\subsection{Chernoff's inequalities}\label{sec:chernoff}
The analysis of our algorithms relies on an extensive use of Chernoff's bounds. 
For completeness, we remind the reader of these equalities.

 Let $X_1, \cdots, X_n$ be independent random variables taking values in $\{0, 1\}$. Let $X=\sum_{i=1}^n X_i$ denote their sum, and let $E(X)$ denote the expected value of $X$. Then, for any $0<\delta<1$, we have the following bounds.
\begin{equation}\label{eq:chernoff1}
\Pr(X\geq (1+\delta)E(X))\leq e^{-\frac{\delta^2 E(X)}{3}}
\end{equation}
\begin{equation}\label{eq:chernoff}
\Pr(X\leq (1-\delta)E(X))\leq e^{-\frac{\delta^2 E(X)}{2}}
\end{equation}

\paragraph{Negatively-correlated random variables.} In some cases, the aforementioned Chernoff's inequalities hold also if the random variables are negatively associated. In particular, sampling from a larger set without replacement leads to negatively associated random variables for which Chernoff's bounds continue to hold. For this and related basic results on negative association see \cite{joag1983negative,dubhashi1998balls}. Since we will only be dealing with Bernoulli variables we can alternatively use a slightly weaker but simpler notion from~\cite{Panconesi} which defines random Bernoulli variables $X_1, \cdots, X_n$ as {\em negatively $1$-correlated} or simply {\em negatively-correlated} if for every subset $I\subseteq \{1,2, \cdots,k\}$, we have:
$$
\Pr\left(\bigwedge_{i\in I} X_i=1\right)~\leq~\Pi_{i\in I} \Pr(X_i=1), 
$$
$$
\Pr\left(\bigwedge_{i\in I} X_i=0\right)~\geq~\Pi_{i\in I} \Pr(X_i=0).\medskip
$$
Panconesi and Srinivasan showed in \cite{Panconesi} that this condition holds when sampling without replacement and furthermore proved that  Chernoff's inequalities mentioned in Equations \ref{eq:chernoff1} and \ref{eq:chernoff} continue to hold for negatively-correlated Bernoulli variables.

\section{Algorithms for the fully-synchronous setting}\label{SEC:GLOBAL}
In this section we assume that all agents start the algorithm with their clocks set to zero. In Section \ref{removing-global} we show how to remove this global-clock assumption at some additive cost in the running time.

The interesting cases are when $\epsilon$ is a small constant, but we allow it to be much smaller. Specifically, let $\eps>1/n^{1/2-\eta}$, for some arbitrarily small constant $\eta>0$.
We  present  symmetric and  simple randomized  algorithms that solve the noisy broadcast and the majority-consensus  problems.  The running times and message complexities of both algorithms are asymptotically optimal, that is, they both terminate after $O(\frac{1}{\eps^2}\log n)$ rounds and use a total of $O(\frac{1}{\eps^2}n\log n)$ messages.

Although our algorithms are simple, their analysis is quite involved. Most of the technical ideas in this paper  are  used for  the analysis of our noisy broadcast algorithm, hence we focus on  this algorithm.
The algorithm  consists of two stages. The first stage of the algorithm is intended to {\em activate}   all agents (an agent is considered as activated upon receiving its first message), and to make sure that  overall, the average initial opinion of activated agents has some non-negligible bias towards the correct opinion.
Stage~II  of the algorithm is meant to boost the bias  using repeated samplings until consensus is reached.

\subsection{Stage I: Spreading the information}
Our goal in the first stage of the algorithm is to quickly allow each agents to set an opinion, so that the fraction of  correct agents is at least $1/2 + \Omega(\sqrt{\log n/n})$. Then the second stage will be employed to boost this bias using more standard techniques of repeatedly taking majority.

\subsubsection{Intuition}

In order to spread the correct opinion $\cO$ while  controlling the deterioration of the average bias of informed agents towards $\cO$, the first idea we employ is to delay propagation of messages, and synchronize them, by  grouping the time slots into {\em phases}. That is, we propagate the information in layers, forming a tree, whose root is the source agent~$\cs$ (layer~0).  To control the  reliability deterioration of the messages, we synchronize the phases so that all activated agents broadcast in a phase at  the same time.
In particular, in the first phase,
 called phase 0, only the source agent transmits messages (all non-source agents are waiting). Recall that every such message is correct with probability at least $1/2+\eps$.
 Phase 0 lasts for $\beta_s:=\Theta(\frac{1}{\eps^2}\log n)$ rounds, and is meant to allow the source agent to directly inform sufficiently many agents, and guarantee that with high probability the bias towards $\cO$ of the opinions that these agents have heard is bounded away from zero, specifically, the bias is at least $\eps/2$. Note that at this point, we are left with solving the noisy majority-consensus  problem with an initial set $A$ of agents of size $\Theta(\frac{1}{\eps^2}\log n)$ whose majority-bias is $\Omega(\sqrt{\log n/|A|})$.

 The general description of our algorithm in Stage I  is as follows: any  agent receiving a message in some phase $i$ (also including the case $i= 0$)  keeps {\em silent} (waits, and does not send messages)  until phase~$i$ is completed and, at the end of the phase, it chooses uniformly at random an arbitrary message among the messages it has received, and sets  its {\em initial opinion} as the value of this message. Only after phase $i$ is completed, will such an agent send messages. That is, when the next phase~$i+1$ starts, each such agent will start to  send its initial opinion
repeatedly in every round until  the whole of Stage I  is completed.
Hence, phase~$i$  is responsible for passing information between all the already activated agents  (these are the agents in layers $0,1,\ldots i-1$) to the newly activated agents in phase $i$ (forming layer $i$).

Because of the noise in the messages, the quality of information that propagates  between layers deteriorates exponentially fast in $\eps$.
Specifically, if the fraction of correct agents at layer $i$ is some $1/2+\delta_i$, then the expected fraction of correct messages reaching agents at layer~$i+1$ is $1/2+2\eps\delta_i$.
To guarantee that this controlled level of deterioration holds w.h.p., as well as to account for this already problematic  phenomena, our phasing process makes sure that
the number of agents informed in the next layer increases quadratically faster than the  deterioration factor. That is, the number of newly informed agents increases by a factor larger than $ 1/\eps^2$.
Maintaining this property throughout all phases allows us to  guarantee  that when
$x$ agents are activated (where $x$ is sufficiently large), then, w.h.p., the  bias towards the correct opinion  is $\Omega(\sqrt{\log n/x})$. In particular, this implies that when all $n$ agents are activated, the bias towards the correct opinion is $\Omega(\sqrt{\log n/n})$.

\subsubsection{Formal description of Stage I}
Choose parameters  $f,\beta,s=\Theta(1/\eps^2)$  such that  $f> c_1\beta >c_2 s > c_3/\epsilon^2,$ for sufficiently large  constants $c_1,c_2,c_3>0$. Let $\beta_s=s\log n$, and $\beta_f=f\log n$. In addition, let $T=\lfloor {\log(n/2\beta_s)}/{\log (\beta+1)}\rfloor.$
Note that
$\beta_s(\beta+1)^T\leq n/2$ and that $T=O(\frac{\log n}{\log (1/\epsilon)})$.
 
 We  group the rounds of Stage I  into $T+2$ phases,  such that
 for each $0\leq i\leq T$, phase $i+1$ immediately follows phase~$i$.  Phase~0 takes~$\beta_s$ rounds,   phase $i$, for $1\leq i \leq T$, takes  $\beta$ rounds, and phase $T+1$ takes $\beta_f$ rounds.
Formally, letting $[x,y)$ denote the time period  from round~$x$ until round $y-1$, we have: $\mbox{phase~}0=[0,\beta_s),$ $\mbox{for~}1\leq i\leq T,  \mbox{~phase~}i  =[\beta_s+(i-1)\beta, \beta_s+i\beta),$ and, phase $T+1=[\beta_s+T\beta, \beta_s+T\beta+\beta_f)$.

 At a given time, a non-source agent is called {\em activated} if it  already heard a message by that time (the source agent is always considered activated). A non-activated agent is called {\em dormant}.
  For an agent $a$, let $t_a$
denote the first time $a$ was activated, and let $i_a$ be the  integer $i$ for  which~$t_a$ belongs to phase~$i$.
An agent $a$ is at {\em level} $i$ if $i_a=i$. In particular, the source is of level 0.

\begin{framed}
\noindent{{\bf The rule of Stage~I:} }
Consider an activated agent $a$ of level $i$. Agent $a$ waits until phase $i+1$ starts before sending any message. During phase $i$ it collects all messages it heard in the phase, chooses one of them uniformly at random, and sets its {\em initial opinion} $\cO_0(a)$ to be the opinion it heard in that message. 
The agent then sends its initial opinion $\cO_0(a)$ in each round during the phases $i_a+1, i_a+2, \cdots,T+1$. (In other words, Agent $a$ waits until phase $i_a$ is completed and then it starts sending its initial opinion repeatedly in every round until the end of Stage I.) An  agent is called {\em initially correct} if the message it heard for the first time is correct, i.e., if $\cO_0(a)=\cO$. 
\end{framed}

\begin{remark}\label{remark1} It may be the case that an agent activated in some phase $i$ (especially for large $i$) receives several messages throughout that phase. We have chosen to let the agent set its initial opinion according to a  message chosen uniformly at random among these messages. For the purposes of this current section, where a global clock is assumed, all proofs would have carried out in the same manner, had we chosen instead, to let the agent set its initial opinion according to the first message it received. The reason for choosing a random message is to guarantee that the order in which the agent receives its messages during any phase does not influence the actions of this agent.  This property will be more important in Section~\ref{removing-global}, which relaxes the synchronization requirement.
\end{remark}

Note first, that in particular,  in phase 0, the source  $\cs$ is the only agent sending any messages. Let $X_0$ be the number of agents activated at phase $0$.  More generally, for $i$ a non-negative integer define $X_i$ as the random variable indicating the number of  agents that were activated at some time before the end of phase~$i$. Let $Y_i$ denote the random variable indicating the number of  agents that were activated during phase $i$. Hence, we have: $X_i=\sum_{j=0}^{i} Y_j.$
 Let $Z_i$ denote the number of initially correct agents among the $Y_i$ agents that were activated during phase $i$ and let $\epsilon_i$ be such that $Z_i=(1/2+\epsilon_i)Y_i$. We call $\epsilon_i$ the bias of phase $i$.

\begin{claim}\label{claim-phase0}
By choosing $s > c/\epsilon^2$ for a large enough constant $c$, it is guaranteed that at the end of phase 0, w.h.p., we have $\beta_s/3\leq X_0\leq\beta_s$ activated agents whose bias towards the correct opinion $\cO$ is at least $\eps/2$, that is, $\epsilon_0\geq\epsilon/2$.
\end{claim}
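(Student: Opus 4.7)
I would split the claim into three pieces---the deterministic upper bound $X_0\le\beta_s$, the probabilistic lower bound $X_0\ge\beta_s/3$, and the bias bound $\epsilon_0\ge\epsilon/2$---and combine them via a union bound. The upper bound is immediate: only the source transmits during phase $0$ and it pushes to a single random recipient per round, so at most one new agent is activated per round and $X_0\le\beta_s$. For the lower bound, let $I_a$ indicate that non-source agent $a$ receives at least one message during phase $0$, so $X_0=\sum_a I_a$. Since each of the $\beta_s$ messages is pushed to one of the $n-1$ non-source agents uniformly at random, a standard balls-and-bins computation gives
\[
\mathbb{E}[X_0]\;=\;(n-1)\Bigl(1-\bigl(1-\tfrac{1}{n-1}\bigr)^{\beta_s}\Bigr)\;\ge\;\beta_s\Bigl(1-\tfrac{\beta_s}{2(n-1)}\Bigr)\;\ge\;\tfrac{3}{4}\beta_s,
\]
where the last inequality uses $\beta_s\le n/4$, which follows from the constraint $\beta_s(\beta+1)^T\le n/2$ together with $\beta+1\ge 2$. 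The indicators $\{I_a\}$ are negatively-correlated (non-emptiness indicators in balls-and-bins), so the Chernoff bound of Section~\ref{sec:chernoff} applies; choosing $s$ large enough makes $X_0\ge\tfrac{1}{2}\mathbb{E}[X_0]\ge\beta_s/3$ hold with probability at least $1-n^{-c}$ for any desired constant $c$.

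For the bias bound, I would condition on the random routing $R$ of phase $0$, which determines, for each activated agent $a$, the set $S_a\subseteq\{0,\ldots,\beta_s-1\}$ of rounds in which $a$ receives a message; the sets $\{S_a\}$ partition $\{0,\ldots,\beta_s-1\}$ because exactly one agent is targeted per round. Let $B_t\in\{0,1\}$ indicate that the bit delivered in round $t$ is correct; the $B_t$ are mutually independent Bernoullis with $\mathbb{E}[B_t]\ge 1/2+\epsilon$, and, since the noise is independent of the routing, they are independent of $R$. By the rule of Stage~I, each activated $a$ draws $\tau_a$ uniformly at random from $S_a$, independently across agents, and sets $\cO_0(a)=B_{\tau_a}$. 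Thus, conditional on $R$, the event $\{\cO_0(a)=\cO\}$ has probability $\tfrac{1}{|S_a|}\sum_{t\in S_a}\Pr[B_t=1]\ge 1/2+\epsilon$, and because different agents' correctness indicators are functions of disjoint blocks of $\{B_t\}$ together with the independent samples $\{\tau_a\}$, they are mutually independent. A second Chernoff bound---applied conditional on $R$ and on the event $\{X_0\ge\beta_s/3\}$ of the previous paragraph---then yields $Z_0\ge(1/2+\epsilon/2)X_0$, i.e.\ $\epsilon_0\ge\epsilon/2$, with conditional failure probability at most $\exp(-\Omega(\epsilon^2 X_0))\le\exp(-\Omega(\epsilon^2\beta_s))=n^{-\Omega(s\epsilon^2)}$, which is at most $n^{-c}$ once $s\ge c_3/\epsilon^2$ with $c_3$ sufficiently large.

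A union bound over the two concentration failures finishes the proof. The main obstacle is to keep the three sources of randomness cleanly separated: the routing $R$, the bit flips $\{B_t\}$, and the per-agent samples $\{\tau_a\}$. This is precisely where the convention of Remark~\ref{remark1}---picking a random message rather than the first one received---is helpful: it makes each $\tau_a$ a fresh uniform draw that is independent of both $R$ and $\{B_t\}$, so the partition structure of $\{S_a\}$ immediately renders the correctness indicators of different agents mutually independent and Chernoff applies off the shelf.
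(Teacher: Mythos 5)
Your proposal is correct and follows essentially the same route as the paper: both split the claim into the activation count $X_0$ and the bias $\epsilon_0$, and establish each part by a Chernoff bound exploiting $\beta_s = s\log n$ with $s \gg 1/\epsilon^2$. The only differences are minor --- the paper lower-bounds $X_0$ by observing that each of the $\beta_s$ messages hits a dormant agent with probability at least $1/2$ (since at most $\beta_s \le n/2$ agents are ever activated in phase 0), so that $X_0$ dominates a Binomial$(\beta_s,1/2)$, rather than via your occupancy expectation with negatively-associated indicators; and your explicit conditioning on the routing to obtain mutual independence of the per-agent correctness indicators is a more careful rendering of an independence step the paper simply asserts.
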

\begin{proof}
Recall that $Z_0$ denotes the number of initially correct agents among the $X_0 = Y_0$ agents that were activated during phase $0$ and let $\epsilon_0$ be such that $Z_0=(1/2+\epsilon_0)Y_0$. Our goal is to show that $\epsilon_0\geq \epsilon/2$.

  Recall that phase~0 lasts for  $\beta_s=s\log n$ rounds, and that until the phase is completed only the source agent $\cs$  is sending messages. Hence, during phase 0, there are always at most $\beta_s$ activated agents, and in particular, at least~$n/2$ dormant agents. Hence, each message sent during phase 0 has probability at least 1/2 to activate an agent. The number of activated agents at the end of phase 0 is thus dominated by $\beta_s$ independent Bernoulli($1/2$) random variables and by Chernoff's inequality, we can choose the parameter $s$ (in the definition of $\beta_s$) to be a sufficiently large constant so that w.h.p., at the end of phase 0, we have at least $\beta_s/3$ activated agents, that is, $X_0=Y_0\geq\beta_s/3$.

Let us now focus on the random faults occurring in the messages sent during phase 0.  Each of the $Y_0$ activated agents chooses one message uniformly at random among the messages it heard (typically it only heard one message anyways). The opinion received by this chosen message (and, in fact, by any message) has probability at least $1/2+\epsilon$ to be correct. Hence, the agent has probability at least $1/2+\epsilon$ to be activated with the correct opinion $\cO$. It follows that the expected number of agents that were activated with the correct opinion during phase 0 is at least  $(1/2+\epsilon)Y_0$. In the terminology of Chernoff's inequality (see Equation~\ref{eq:chernoff}), we have $E(X)\geq (1/2+\epsilon)Y_0$. By taking $\delta=\epsilon/2$, we get that $(1-\delta)E(X)>(1/2+\epsilon/2) Y_0$. According to Chernoff's inequality, the probability that the expected number of agents that were activated with the correct opinion during phase 0 is less than this amount, is at most $e^{-{\delta^2 E(X)}/{2}}=e^{-{O(\epsilon^2 Y_0)}}$. Since $Y_0\geq\beta_s/3=(s/3)\log n$, then for sufficiently large $s\gg 1/\epsilon^2$ it follows that this probability $e^{-{O(\epsilon^2 Y_0)}}$ is polynomially small. In other words, w.h.p., the number $Z_0$ of initially correct agents during phase 0 is at least $(1/2+\epsilon/2) Y_0$. This establishes $\epsilon_0\geq \epsilon/2$ and the proof of the claim. 
\end{proof}

Observe that by Claim \ref{claim-phase0}, phase 0 essentially reduces the noisy broadcast problem to an instance of the noisy majority-consensus  problem, with an initial set  of size $X_0= \Theta(\beta_s)=\Theta(\frac{1}{\eps^2}\log n)$ and majority-bias of at least $\eps/2=\Omega(\sqrt{\log n/|X_0|})$. What we shall show is that in general, phases $0,1,\ldots i$, where $i\leq T$, take us to an instance of the noisy majority-consensus  problem, with an initial set $A_i$ of size $|A_i|= \Theta(\frac{1}{\eps^{2i+2}}\log n)$ and majority-bias of at least $\eps^{i+1}/2=\Omega(\sqrt{\log n/|A_i|})$.
For $T=\lfloor {\log(n/2\beta_s)}/{\log (\beta+1)}\rfloor$ this would lead to showing that w.h.p., after $T$ phases, the number of activated agents is $\Omega(\eps^2 n)$ and the fraction of initially correct agents  is at least $1/2 + \Omega(\sqrt{\log n/(\eps^2n)})$. The last phase of the stage taking $\beta_f\gg \log n/\epsilon^2$ rounds
would then lead to the following lemma summarizing the performances of Stage I.
\begin{lemma}\label{lem:main}
Stage I  takes $O(\frac{1}{\eps^2}\log n)$ rounds. At  the end of the stage  the following event $E$  holds w.h.p:

\begin{enumerate}
\item
All agents are activated.
\item The fraction of initially correct agents  is at least $1/2 + \Omega(\sqrt{\log n/n})$.
\end{enumerate}
\end{lemma}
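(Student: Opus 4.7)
The plan is to establish Lemma~\ref{lem:main} by induction over the $T+2$ phases of Stage~I, tracking two aggregate quantities at the end of each phase $i$: the number of activated agents $X_i$ and the overall bias $\delta_i$ toward $\cO$, defined by $B_i = (1/2+\delta_i)X_i$ with $B_i := \sum_{j\le i} Z_j$ the number of initially correct agents among the $X_i$ activated ones. The key inductive claim, for $0 \le i \le T$, is the conjunction of (a) geometric growth $X_i \ge c_1 \beta_s (c_2\beta)^i$ and (b) the \emph{information invariant} $\delta_i^2 X_i \ge c_3 \log n$. The base case $i=0$ is exactly Claim~\ref{claim-phase0}: $X_0 \ge \beta_s/3$ and $\delta_0 = \epsilon_0 \ge \epsilon/2$ give $\delta_0^2 X_0 \ge (\epsilon/2)^2(\beta_s/3) = \Omega(\log n)$ because $\beta_s = s\log n$ with $s \gg 1/\epsilon^2$. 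The running time bound $O(\log n/\eps^2)$ is immediate from summing phase lengths: $\beta_s + T\beta + \beta_f = \Theta(\log n/\eps^2)$, using $T\beta = O(\log n/\log(1/\eps)) \cdot O(1/\eps^2)$.

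The inductive step from $i$ to $i+1$ (with $i+1 \le T$) uses two Chernoff applications. First, the $X_i$ current senders launch $\beta X_i$ messages during Phase $i+1$ to uniformly random targets; since $X_T \le n/2$ by the choice of $T$, each message hits a dormant agent with probability at least $1/2$, and a Chernoff bound for negatively-correlated Bernoullis (Section~\ref{sec:chernoff}, via sampling without replacement) yields $Y_{i+1} \ge c\beta X_i$ w.h.p., preserving~(a). Second, conditional on the sender set, each received message is independently correct with probability $1/2 + 2\eps\delta_i$ (the sender holds the correct opinion with probability $1/2+\delta_i$, and the channel preserves the bit with probability $1/2+\eps$, giving $1/2 + 2\eps\delta_i$ for the composition). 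Since each newly activated agent adopts a uniformly random received message as its initial opinion, $Z_{i+1}$ is by symmetry a sum of $Y_{i+1}$ Bernoullis of parameter $1/2 + 2\eps\delta_i$, so Chernoff gives $Z_{i+1} \ge (1/2 + 2\eps\delta_i)Y_{i+1} - O(\sqrt{Y_{i+1}\log n})$. Combining $B_{i+1} = B_i + Z_{i+1}$ and using $Y_{i+1} \gg X_i$ (so $X_{i+1} \approx Y_{i+1}$), dividing by $X_{i+1}$ and squaring yields the recursion
\[
\delta_{i+1}^2 X_{i+1}\ \ge\ 4c\eps^2 \beta \cdot \delta_i^2 X_i
\]
up to a Chernoff slack that I handle below. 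Because $\beta > c_3/\eps^2$ for large enough $c_3$, the multiplier $4c\eps^2\beta$ is a constant strictly greater than~$1$, so invariant~(b) is preserved and in fact strengthens geometrically. The slack is strictly dominated by the main term precisely because invariant~(b) already gives $\delta_i \ge \Omega(\sqrt{\log n/X_i})$, whence $\sqrt{\log n/Y_{i+1}} \le \sqrt{\log n/(c\beta X_i)} = O(\eps\,\delta_i)$.

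Phase $T+1$ of length $\beta_f = \Theta(\log n/\eps^2)$ delivers both conclusions of the lemma. After Phase $T$, invariant~(a) together with the choice of $T$ gives $X_T = \Omega(\eps^2 n)$, so the number of messages sent in Phase $T+1$ is $\beta_f X_T = \Omega(n\log n)$, each landing on a uniformly random agent; a standard coupon-collector / Chernoff argument then shows every dormant agent receives at least one such message w.h.p., so $X_{T+1} = n$, proving part~1. For part~2, the $Y_{T+1} = n - X_T$ newcomers each adopt a random received message with content bias $2\eps\delta_T$, so the same Chernoff analysis as in the inductive step yields $\delta_{T+1} \ge \delta_T(X_T + 2\eps(n - X_T))/n$ up to an $O(\sqrt{\log n/n})$ additive slack (dominated by the main term thanks to invariant~(b) at $i=T$). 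Substituting invariant~(b) and the lower bound $X_T = \Omega(\eps^2 n)$ gives $\delta_{T+1}^2\,n = \Omega(\log n)$, i.e.\ $\delta_{T+1} \ge \Omega(\sqrt{\log n/n})$.

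The main obstacle is the tight interplay between the multiplicative gain in the information invariant $\delta^2 X$ per phase (factor $\Theta(\eps^2\beta)$) and the additive concentration slack $O(\sqrt{Y_{i+1}\log n})$ that each Chernoff step introduces: in order for the gain to strictly dominate the slack at every phase, one needs exactly the relation $\beta = \Omega(1/\eps^2)$ (with a sufficiently large constant) hard-coded into the algorithm, together with the base-case boost from Phase~0 of length $\beta_s = \Theta(\log n/\eps^2)$. A secondary subtlety is that the random variables involved are not strictly independent (two messages from a common sender share the sender's opinion, and each newcomer's initial opinion is a random sample from several received messages); these dependencies are handled via the negative-correlation machinery of Section~\ref{sec:chernoff}, which is exactly why that preliminaries subsection is included.
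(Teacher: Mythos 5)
Your proposal is correct and follows the same skeleton as the paper's proof --- phase-by-phase induction, a Chernoff bound for the geometric growth of the activated set (Claim~\ref{claim:Xi}), a Chernoff bound for the newcomers' bias with dependencies handled via negative correlation (Claims~\ref{claim-phase0} and~\ref{claim:grow}), and a coupon-collector argument for full activation in the long final phase (Corollary~\ref{cor:activated}) --- but your bookkeeping of the bias is genuinely different. The paper tracks the bias $\eps_i$ of each \emph{layer} separately (the agents first activated in phase $i$), proves the geometric decay $\eps_i\ge\eps^{i+1}/2$, and pairs it with the separate size bound $Y_i\ge\beta^{i-1}\log n$ of Corollary~\ref{cor:Yi} so that $\eps_i^2Y_i$ stays large enough for concentration; the final bias is then the worst layer bias $\eps^{T+2}/2$. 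You instead track the \emph{cumulative} bias $\delta_i$ and carry the single invariant $\delta_i^2X_i\ge c\log n$, which grows by a factor $\Theta(\eps^2\beta)>1$ per phase. Your invariant is arguably cleaner: it isolates exactly why $\beta=\Omega(1/\eps^2)$ with a large constant is the right parameter choice (the multiplicative gain in $\delta^2X$ must beat the additive slack $O(\sqrt{Y_{i+1}\log n})$), a dependence that in the paper appears only implicitly through the exponent $\Omega(\eps^{2i}\beta^{i-1}\log n)$ in Claim~\ref{claim:grow}. Two spots deserve a touch more care. First, ``each received message is independently correct with probability $1/2+2\eps\delta_i$'' is a bound on a marginal, not an independence statement (messages from a common sender are correlated, and within a round the recipients' choices are sampled without replacement); you flag this at the end and, like the paper, resolve it by negative correlation, but the word ``independently'' should go. Second, the final phase is the one place where the multiplicative recursion degrades, because $Y_{T+1}=n-X_T$ saturates at $n$ instead of reaching $\beta_fX_T$: your inequality $\delta_{T+1}\ge\delta_T\bigl(X_T+2\eps(n-X_T)\bigr)/n$ is right, but extracting $\delta_{T+1}^2 n=\Omega(\log n)$ from it uses $\eps=\Omega(1)$ (the worst case $X_T\approx\eps n$ yields only $\Omega(\eps\log n)$). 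This is not a gap relative to the paper --- the case $i=T+1$ of Claim~\ref{claim:grow} has exactly the same dependence on $\eps$ being bounded below --- but both arguments are tight only in that regime, and your invariant formulation actually makes this limitation easier to see.
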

The remainder of this subsection is devoted to the proof of Lemma \ref{lem:main}. It is easy to verify that the number of rounds in Stage I  is $\beta_s+\beta T+\beta_f=O(\frac{1}{\eps^2}\log n)$.
Our goal thus is to show that event $E$ mentioned in the lemma holds with high probability.
The proof considers a sequence of events  $E_1,E_2,\cdots E_{\tau}$, for some $\tau=O(\log n)$, where $E_{\tau}=E$.
We will show that  event $E_i$ occurs w.h.p., given $E_{i-1}$. This would imply that $E$ occurs w.h.p., by repeatedly invoking the standard  argument  $|\Pr(E_{i+1}\mid E_i)-\Pr(E_{i+1})|\leq \Pr(\bar{E}_i)$.

Recall that Claim \ref{claim-phase0} asserts that  w.h.p., we have $\beta_s/3\leq X_0\leq\beta_s$ and  $\eps_0\geq \eps/2$. In what follows, we assume that this highly likely event holds (see the  paragraph above).\\

\noindent{\bf Analysis for phase $i$, where $1\leq i\leq T$:}
It is easy to see that $X_i$, the number of activated agents at the end of phase~$i$ is at most $X_i\leq (\beta+1)^i X_{0}=O \left(\frac{1}{\eps^{2i+2}} \log n\right)$. This follows trivially from the fact that  $X_i=X_{i-1}+Y_i$, and from the fact that $Y_i\leq \beta X_{i-1}$ (because for $i\geq 1$, phase $i$ is composed of~$\beta$~rounds and in each such round  precisely $X_{i-1}$ messages are being sent). The following claim states that w.h.p., the value of $X_i$  is, in fact, very close to $(\beta+1)^{i} X_{0}$. Establishing this claim will enable us to show that up to  phase $T$, the
values $Y_i$ are increasing exponentially and that at the beginning of phase $T$ we already have $\Omega(\eps^2 n)$ activated agents. The proof of the following claim extensively uses concentration properties given by Chernoff's inequality:

\begin{claim}\label{claim:Xi}
 W.h.p., for every  $i$,  $1\leq i \leq T$, we have: $(\beta+1)^{i} X_{0}/16\leq X_i \leq (\beta+1)^{i} X_{0}.$
\end{claim}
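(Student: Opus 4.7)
The upper bound is immediate: in each of the $\beta$ rounds of phase~$i$ exactly $X_{i-1}$ messages are sent, so $Y_i\le \beta X_{i-1}$, giving $X_i\le (\beta+1)X_{i-1}$ and hence $X_i\le (\beta+1)^i X_0$ by iteration. For the lower bound I induct on $i$ with the strengthened hypothesis $X_{i-1}\ge (\beta+1)^{i-1}X_0\prod_{j<i}(1-\mu_j)$ for a deterministic sequence $\mu_j$ defined below, and verify at the end that $\prod_{j\le T}(1-\mu_j)\ge 1/16$.

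At the inductive step I write $Y_i=\sum_{j\in D_i}I_j$, where $D_i$ is the dormant set at the start of phase~$i$ and $I_j$ indicates that agent $j$ receives at least one of the $M_i:=\beta X_{i-1}$ messages (whose destinations are i.i.d.\ uniform on $[n]$). The $I_j$'s are negatively correlated Bernoullis by the standard balls-and-bins argument (cf.\ Section~\ref{sec:chernoff}), so Chernoff's inequality~(\ref{eq:chernoff}) applies. Using $(1-1/n)^{M_i}\le e^{-M_i/n}$ together with $1-e^{-x}\ge x(1-x/2)$ for $x\ge 0$, I obtain $E[Y_i]\ge M_i(1-\bar\mu_i^{(1)})$ with $\bar\mu_i^{(1)}\le \beta(\beta+1)^{i-1}X_0/n$ (using the deterministic upper bound on $X_{i-1}$). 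Applying Chernoff with $\delta_i=\Theta(\sqrt{\log n/E[Y_i]})$ --- the smallest deviation yielding per-phase failure probability $1/n^{c+2}$ --- gives $Y_i\ge E[Y_i](1-\delta_i)$ w.h.p. Combining, $X_i\ge (\beta+1)X_{i-1}(1-\mu_i)$ with $\mu_i:=\bar\mu_i^{(1)}+\delta_i$, closing the induction.

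It remains to control the cumulative loss. By $1-x\ge e^{-2x}$ on $[0,1/2]$, it suffices to show $\sum_{j\le T}\mu_j\le (\ln 16)/2\approx 1.39$. The expectation sum telescopes: $\sum_j \bar\mu_j^{(1)}\le (\beta/n)\sum_j(\beta+1)^{j-1}X_0\le X_0(\beta+1)^T/n\le 1/2$, using the defining inequality $X_0(\beta+1)^T\le n/2$. For the Chernoff sum, the inductive hypothesis makes $E[Y_j]\ge c\beta X_0(\beta+1)^{j-1}$ geometric in $j$ (with $c=1/32$, say), so $\sum_j\delta_j=O(\sqrt{\log n/(\beta X_0)})=O(1/\sqrt{\beta s})=O(\epsilon^2)$, using $X_0\ge\beta_s/3$ and $\beta,s=\Theta(1/\epsilon^2)$. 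Thus $\sum_j\mu_j\le 1/2+O(\epsilon^2)$, comfortably within budget, and a union bound over the $T=O(\log n)$ inductive steps (plus the event of Claim~\ref{claim-phase0}) preserves the high-probability guarantee. The main obstacle is that with only constant-factor slack over $\Theta(\log n)$ phases, any fixed per-phase Chernoff deviation (e.g.\ $\delta=1/2$) would accumulate an exponential loss $2^{-\Theta(\log n)}$, well below $1/16$; the key technical trick is to calibrate $\delta_i$ to the geometrically-growing expectation $E[Y_i]$, so that the deviations form a geometric series summing to $O(\epsilon^2)$ instead of diverging.
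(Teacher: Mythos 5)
Your proof is correct, and it follows the same overall strategy as the paper's: induction on the phase index, a per-phase Chernoff bound whose deviation decays geometrically with the phase, and a final check that the accumulated multiplicative losses stay above $1/16$. The inner decomposition, however, is genuinely different. The paper works round by round inside phase $i$: it treats the $X_{i-1}$ messages of each round as activation trials with success probability at least $1-(\beta+1)^iX_0/n$, applies Chernoff with the fixed deviation $\delta=1/2^i$, union-bounds over the $\beta$ rounds, and obtains $1/16$ as the product of $\prod_j(1-1/2^j)\geq 1/4$ and $\prod_j(1-(\beta+1)^jX_0/n)\geq 1/4$. You instead view the entire phase as a single balls-into-bins experiment over the dormant set, use negative association of the occupancy indicators, and calibrate $\delta_i\sim\sqrt{\log n/E[Y_i]}$ so that the deviations form a geometric series summing to $O(\eps^2)$. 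Your route buys two things: it accounts explicitly for collisions (several messages landing on the same dormant agent) via $1-e^{-x}\geq x(1-x/2)$, a point the paper elides when it asserts that $A_{i,r}$ is ``dominated by'' independent Bernoullis; and the loss budget $\sum_j\mu_j\leq 1/2+O(\eps^2)$ is verified by a clean telescoping sum rather than two separate infinite products. The paper's fixed $\delta=1/2^i$ is marginally simpler since it avoids tracking expectations, but both choices exploit the same fact that $E[Y_i]$ grows geometrically, so per-phase deviations can shrink geometrically while keeping every failure probability polynomially small. One small point to verify in your write-up: $\bar\mu_i^{(1)}$ must absorb both the collision loss $M_i/(2n)$ and the shrinkage of the dormant set, $|D_i|/n\geq 1-(\beta+1)^{i-1}X_0/n$; the stated bound $\beta(\beta+1)^{i-1}X_0/n$ does cover their sum once $\beta\geq 2$, so the budget of $(\ln 16)/2$ is still met with room to spare.
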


\begin{proof}
As mentioned, with probability 1, we have:
\begin{equation}\label{eq:Xibound}
X_i\leq (\beta+1)^{i} X_{0}. 
\end{equation}
Hence, our goal  is to prove the other part of the claim, namely, the lower bound  $(\beta+1)^{i} X_{0}/16\leq X_i$. This statement trivially holds for $i=0$. Hence, we shall prove the statement by induction on $i$, where the basis of the induction is the trivial case $i=0$.
Fix  an integer $i\geq 1$ and assume by induction that the claim holds for   $i-1$.
Consider a round~$r$ in phase $i$ (where $1\leq r\leq \beta$). Equation \ref{eq:Xibound} implies that  the number of dormant agents in round $r-1$ of  phase $i$ is always at least
$ n-X_i\geq n-(\beta+1)^{i} X_{0}.$
Therefore, the probability that a given message sent in round~$r$ activates an agent is at least
$1-{(\beta+1)^{i} X_{0}}/{n}$. Note that at round $r$  of phase~$i$ (in fact, at any round of phase $i$), precisely $X_{i-1}$ messages are being sent.  
Letting $A_{i,r}$ denote the number of agents that are activated in round $r$ of phase $i$, we thus have that $A_{i,r}$ is dominated by $X_{i-1}$ independent Bernoulli($1-{(\beta+1)^{i} X_{0}}/{n}$) variables with an expected value of:
\begin{equation}\label{eq:Ai}
E(A_{i,r})\geq \left(1-{(\beta+1)^{i} X_{0}}/{n}\right)X_{i-1}.
\end{equation}
In particular,  since $i\leq T$, and  since $\beta_s(\beta+1)^T\leq n/2$, we have $E(A_{i,r})\geq  X_{i-1}/2$.
Furthermore, applying Chernoff's inequality, for any $\delta>0$, we have:
$$
\Pr\left((1-\delta)E(A_{i,r})\leq A_{i,r}\right)\geq 1-e^{-\frac{\delta^2 E(A_{i,r})}{2}}=1-e^{- \Omega(\delta^2 X_{i-1})}.
$$
By the induction hypothesis, we get that $X_{i-1}\geq (\beta+1)^{i-1} X_{0}/16 =\Omega((\beta+1)^{i-1}\log n)$, w.h.p.
Taking $\delta={1}/{2^i}$, we thus get that:
$$
\Pr\left((1-{1}/{2^i})E(A_{i,r})\leq A_{i,r}\right)\geq 1-e^{- \Omega({(\beta+1)^{i-1}}\log n/{2^{2i}})}.
$$
Taking $\beta$ to be sufficiently large thus implies that, w.h.p., we have:
$
\left(1-{1}/{2^i}\right)E(A_{i,r})\leq A_{i,r}.
$
A union bound over all rounds $r$ in phase $i$ then guarantees that, w.h.p:
$$
\left(1-{1}/{2^i}\right)\sum_{r=1}^\beta E(A_{i,r})\leq \sum_{r=1}^\beta A_{i,r}~.
$$
Using the bound from Equation \ref{eq:Ai} and observing that $Y_i=\sum_{r=1}^\beta A_{i,r}$, we get that w.h.p:
\begin{equation}\label{eq:Yi}
\left(1-{1}/{2^i}\right)  \left(1-{(\beta+1)^{i} X_{0}}/{n}\right) \cdot \beta X_{i-1}\leq  Y_i~.
\end{equation}
Since $X_i=Y_i+X_{i-1}$, we get that, w.h.p:
$$
\left(1-{1}/{2^i}\right)   \left(1-{(\beta+1)^{i} X_{0}}/{n})\right) \cdot (\beta+1) X_{i-1} \leq  X_i~.
$$
Hence,
\begin{equation}\label{eq:Xi}
(\beta+1)^i X_{0}\cdot \Pi_{j=1}^i\left(1-{1}/{2^j}\right) \Pi_{j=1}^i  (1-{(\beta+1)^{j} X_{0}}/{n}) \leq  X_i~.
\end{equation}
Observe,
\begin{align*}
\Pi_{j=1}^i \left(1-{1}/{2^j}\right) &= 2^{\log \Pi_{j=1}^i \left(1-\frac{1}{2^j}\right)} \\
&= 2^{\sum_{j=1}^i  \log \left(1-\frac{1}{2^j}\right)}\\
&> 2^{-2\sum_{j=1}^\infty  \frac{1}{2^j}}=1/4~.
\end{align*}
Also,
\begin{align*}
\Pi_{j=0}^i  \left(1-{(\beta+1)^{j} X_{0}}/{n}\right) &> 2^{-2\sum_{j=0}^i \frac{(\beta+1)^{j} X_0}{n}}\\
&= 2^{-\frac{2X_0}{n}\sum_{j=0}^i (\beta+1)^{j}}\\
&> 2^{-\frac{4X_0}{n}(\beta+1)^i}\geq 2^{-\frac{4s \log n}{n}(\beta+1)^i}.
\end{align*}
Now, $i<T$, and $T$ is chosen so that $s(\beta+1)^T\log n\leq n/2$, hence, $\frac{s(\beta+1)^i\log n}{n}<1/2$, implying that:
$$ \Pi_{j=0}^i  \left(1-{(\beta+1)^{j} X_0}/{n}\right) >
1/4.
$$
Finally,
By Equation \ref{eq:Xi}, we get:
$$
(\beta+1)^i X_{0}/16 \leq  X_i,
$$
which establishes the proof of Claim~\ref{claim:Xi}.
\end{proof}

Relying on the definition of $T$, the fact that $X_0\geq \beta_s/3$ holds w.h.p., and taking $\beta=O(1/\eps^2)$ such that $\beta>3s$, we ensure that w.h.p., we have $(\beta+1)^{T+1}X_0\geq n/6.$
Hence, Claim~\ref{claim:Xi}  implies the following  lower bound on $X_{T}$, the number of activated agents at the beginning of the last phase in Stage~I.
\begin{corollary}\label{cor:phaseT}
W.h.p.,  we have $X_{T}=\Omega((\beta+1)^{T} X_{0})=\Omega(\eps^2 n).$
\end{corollary}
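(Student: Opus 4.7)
The plan is to derive the corollary directly from Claim~\ref{claim:Xi} together with two preceding facts: the lower bound $X_0 \geq \beta_s/3$ from Claim~\ref{claim-phase0}, and the definition of $T$ as $\lfloor \log(n/2\beta_s)/\log(\beta+1)\rfloor$. The first equality, $X_T = \Omega((\beta+1)^T X_0)$, is just the specialization of Claim~\ref{claim:Xi} to $i = T$, which yields $X_T \geq (\beta+1)^T X_0 / 16$ with high probability. So the only substantive work is the second equality: showing $(\beta+1)^T X_0 = \Omega(\eps^2 n)$.

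For this, I would first use that $T$ is the floor of $\log(n/2\beta_s)/\log(\beta+1)$, so that $T+1 > \log(n/2\beta_s)/\log(\beta+1)$ and hence $(\beta+1)^{T+1} > n/(2\beta_s)$. Multiplying this inequality by $X_0$ and plugging in the lower bound $X_0 \geq \beta_s/3$ (which holds w.h.p.\ by Claim~\ref{claim-phase0}) gives $(\beta+1)^{T+1} X_0 > n/6$ with high probability. Dividing by $\beta+1$, and recalling that $\beta = \Theta(1/\eps^2)$ so $\beta+1 = O(1/\eps^2)$, we obtain $(\beta+1)^T X_0 = \Omega(\eps^2 n)$.

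Combining this with the bound from Claim~\ref{claim:Xi} gives $X_T \geq (\beta+1)^T X_0/16 = \Omega(\eps^2 n)$ with high probability, which is exactly the corollary.

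There is essentially no obstacle here: the entire ``hard'' concentration work has already been absorbed into Claim~\ref{claim-phase0} and Claim~\ref{claim:Xi}, and the remaining step is a short arithmetic calculation exploiting the definition of $T$. The only thing to be a little careful about is the union of the high-probability events from both claims, which costs only an additive $1/\poly(n)$ in the failure probability by the standard argument $|\Pr(E_{i+1}\mid E_i) - \Pr(E_{i+1})| \leq \Pr(\bar{E}_i)$ already invoked earlier in the section.
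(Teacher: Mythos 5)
Your proposal is correct and follows essentially the same route as the paper: both derive $(\beta+1)^{T+1}X_0 > n/6$ from the definition of $T$ together with the w.h.p.\ bound $X_0 \geq \beta_s/3$ of Claim~\ref{claim-phase0}, then divide by $\beta+1 = O(1/\eps^2)$ and apply Claim~\ref{claim:Xi} at $i=T$. The paper states this in one line without spelling out the floor-function manipulation, so your write-up is just a more explicit version of the same argument.
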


This also guarantees that setting $f > c/\eps^2$ for a large enough constant $c$ suffices for the 
$f\log n$ rounds in phase $T+1$ to activate all agents:

\begin{corollary}\label{cor:activated}
W.h.p., at the end of Stage I, all agents are activated.
\end{corollary}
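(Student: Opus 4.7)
Proof plan for Corollary \ref{cor:activated}.

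The plan is to condition on the high-probability event guaranteed by Corollary \ref{cor:phaseT}, namely $X_T = \Omega(\eps^2 n)$, and then show that phase $T+1$ is long enough to activate every remaining dormant agent w.h.p. Throughout phase $T+1$, each of the $X_T$ agents activated by the end of phase $T$ sends its opinion in every round (newly activated agents in phase $T+1$ do not send, since there is no phase $T+2$). Thus, in each round of phase $T+1$, exactly $X_T$ messages are dispatched, each to a uniformly random destination.

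The next step is a per-agent analysis. Fix a dormant agent $a$. In any single round of phase $T+1$, the probability that none of the $X_T$ messages is addressed to $a$ equals $(1-1/n)^{X_T} \leq e^{-X_T/n}$. Since destinations across rounds are independent, the probability that $a$ remains dormant throughout all $\beta_f = f\log n$ rounds of phase $T+1$ is at most $e^{-X_T \beta_f / n}$. Plugging in $X_T = \Omega(\eps^2 n)$, this upper bound becomes $\exp(-\Omega(\eps^2 f \log n)) = n^{-\Omega(f \eps^2)}$.

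The final step is to choose $f > c/\eps^2$ for a sufficiently large constant $c$ so that $f \eps^2$ exceeds any desired constant, making the probability that a fixed agent remains dormant at most $1/n^{c'+1}$ for the desired w.h.p.\ exponent $c'$. A union bound over the at most $n$ agents then yields that the probability of some agent being dormant at the end of phase $T+1$ is at most $1/n^{c'}$. Combined with the conditioning event from Corollary \ref{cor:phaseT} (which itself holds w.h.p.) via the standard $|\Pr(E_{i+1}\mid E_i)-\Pr(E_{i+1})|\leq \Pr(\bar E_i)$ argument already used in the proof of Lemma \ref{lem:main}, this yields that all agents are activated at the end of Stage I w.h.p.

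No step looks like a real obstacle: the per-round activation probability for a fixed dormant agent is independent across rounds (destinations are fresh random choices each round), so no concentration tool beyond a direct product bound and a union bound is needed. The only mild subtlety is to double-check that the number of senders is genuinely $X_T$ and not less (e.g., that no agent is "consumed" in a way that stops it from sending), but by the rule of Stage I every agent of level $\leq T$ sends its initial opinion in every round of phase $T+1$, so this is immediate.
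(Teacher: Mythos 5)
Your proposal is correct and follows essentially the same route as the paper: condition on $X_T=\Omega(\eps^2 n)$ from Corollary~\ref{cor:phaseT}, observe that the $\beta_f X_T \geq c'c\, n\log n$ messages of phase $T+1$ each independently hit a fixed agent with probability $1/n$, bound the miss probability by $(1-1/n)^{\beta_f X_T}=n^{-\Theta(c'c)}$, and finish with a union bound. Your per-round decomposition of the product and your explicit mention of the union bound are only cosmetic differences from the paper's argument.
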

\begin{proof}
Recall that phase $T+1$ consists of $\beta_f= f\log n$ rounds, in which all $X_{T}$ agents that were activated before the beginning of the phase are sending their initial opinion in each round of the phase. According to Corollary \ref{cor:phaseT} we have, w.h.p., that $X_{T} > c'(\eps^2 n)$
for some constant $c'$. Setting $f > c/\eps^2$ for a large enough constant $c$ guarantees that the number of messages sent out over the course of phase $T+1$ is, w.h.p., $\beta_f X_{T} > c'c n \log n$. Note that each agent has a probability of $1/n$ to be the recipient of any such message which is further independent between the messages. The probability that an agent is not activated by the receipt of any message after phase $T+1$ is thus at most $(1 - 1/n)^{c' c n \log n} = n^{-\Theta(c' c)}$.
\end{proof}

\noindent The next corollary  gives a lower bound on the growth of $Y_i$, the number of newly activated agents in phase~$i$.
This lower bound will be used for bounding the bias from below (see Claim \ref{claim:grow}). Note that the duration of the last phase, $T+1$, is taken to be longer than that of phases $i=1 \ldots T$ to guarantee a large number of newly activated agents even in this last phase.
 Indeed, continuing with phases of duration~$\beta$ would activate all agents relatively early, but would also restrict the number of newly activated agents at later phases.
\begin{corollary}\label{cor:Yi}
W.h.p., for every phase $i$, where $1\leq i\leq T+1$, we have $Y_i\geq \beta^{i-1} \log n$~.
\end{corollary}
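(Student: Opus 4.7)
The plan is to split into the two regimes $1 \le i \le T$ and $i = T+1$, since the last phase has a different length $\beta_f$ and is already known (by Corollary~\ref{cor:activated}) to activate essentially every remaining dormant agent.

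For $1 \le i \le T$, I would simply chain three ingredients already in hand. Equation~(\ref{eq:Yi}), proved inside Claim~\ref{claim:Xi}, says that w.h.p.
$Y_i \ge (1 - 1/2^i)\bigl(1 - (\beta+1)^i X_0/n\bigr)\,\beta\, X_{i-1}.$
Since $i \ge 1$ the first factor is at least $1/2$, and by the definition $T = \lfloor \log(n/2\beta_s)/\log(\beta+1)\rfloor$ we have $(\beta+1)^i X_0 \le (\beta+1)^T \beta_s \le n/2$, so the second factor is also at least $1/2$. Plugging in the lower bound $X_{i-1} \ge (\beta+1)^{i-1} X_0 / 16$ from Claim~\ref{claim:Xi}, together with $X_0 \ge \beta_s/3 = (s/3)\log n$ from Claim~\ref{claim-phase0}, I obtain $Y_i \ge c\,\beta\,(\beta+1)^{i-1}\,s\log n$ for an absolute constant $c>0$. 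Using $(\beta+1)^{i-1} \ge \beta^{i-1}$ and the fact that the product of constants $\beta s$ can be made as large as desired by the initial choice of parameters, this gives $Y_i \ge \beta^{i-1}\log n$, as required.

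For $i = T+1$ the direct approach above is wasteful since the dormant pool shrinks. Instead I would use the deterministic bound $X_T \le (\beta+1)^T X_0 \le (\beta+1)^T \beta_s \le n/2$, which comes straight from the choice of $T$, together with Corollary~\ref{cor:activated}, which states that w.h.p.\ all agents are activated by the end of Stage~I, i.e.\ $X_{T+1} = n$. These combine to give $Y_{T+1} = X_{T+1} - X_T \ge n/2$. On the other hand, the same definition of $T$ yields $\beta^T \log n \le (\beta+1)^T \log n \le n/(2s) \le n/2$ for $s \ge 1$, so $Y_{T+1} \ge \beta^T \log n$ is immediate.

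The only bookkeeping issue is that the bounds we invoke (Claim~\ref{claim-phase0}, Claim~\ref{claim:Xi} for each $i$, Equation~(\ref{eq:Yi}), and Corollary~\ref{cor:activated}) each hold only w.h.p. Since there are at most $T+1 = O(\log n)$ phases and each of these events fails with probability at most $1/n^{c}$ for a constant $c$ that can be made arbitrarily large, a union bound preserves the high-probability guarantee. No step requires a genuinely new estimate; the only conceptual point is recognizing that the naive formula from Equation~(\ref{eq:Yi}) degrades for the last phase, which is why the separate argument using $X_T \le n/2$ and Corollary~\ref{cor:activated} is needed.
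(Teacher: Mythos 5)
Your proposal is correct and follows essentially the same route as the paper: for $1\leq i\leq T$ it combines Equation~(\ref{eq:Yi}) (giving $Y_i\geq \beta X_{i-1}/4$) with the lower bounds on $X_{i-1}$ and $X_0$ from Claims~\ref{claim:Xi} and~\ref{claim-phase0}, and for $i=T+1$ it uses $X_T\leq n/2$ together with Corollary~\ref{cor:activated} to get $Y_{T+1}\geq n/2\geq \beta^T\log n$. The only difference is that you spell out the union bound and the constant bookkeeping slightly more explicitly than the paper does.
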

\begin{proof}
Note that Equation \ref{eq:Yi} in the proof of Claim \ref{claim:Xi} implies that for any integer $1\leq i\leq T$, we have $\beta X_{i-1}/4\leq  Y_i$. Together with the lower bound on~$X_{i-1}$ given in Claim \ref{claim:Xi} (i.e.,  $(\beta+1)^{i-1} X_{0}/16\leq X_{i-1}$), and taking sufficiently large $\beta$ and $s$, we get that, w.h.p., $\beta^{i-1} \log n\leq Y_i,$ which establishes the claim for any $i$, such that $1\leq i\leq T$.
By definition of $T$, and the fact that (with probability 1) for $i\geq 1$, $X_i\leq (\beta+1)^i X_{0},$ we get $X_{T}\leq n/2$. Hence, Corollary \ref{cor:activated} implies that, w.h.p., $Y_{T+1}\geq n/2\geq \beta^{T} \log n$. 
\end{proof}

Recall that $1/2+\eps_i$ is the fraction of initially correct agents among the $Y_i$  agents that were activated in phase $i$, i.e.,~$\eps_i$ is the bias toward $\cO$ among these $Y_i$  agents.
Corollary~\ref{cor:Yi} will be useful for obtaining the following claim.

\begin{claim}\label{claim:grow}
W.h.p., for every phase $i$, where $0\leq i \leq T+1$, we have $\epsilon_{i}\geq  \epsilon^{i+1}/2$.
\end{claim}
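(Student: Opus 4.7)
The plan is to induct on $i$, with Claim~\ref{claim-phase0} serving as the base case $i=0$. For the inductive step I will fix $i\geq 1$ and assume $\epsilon_j\geq \epsilon^{j+1}/2$ for all $j<i$, together with the (high-probability) conclusions of Claim~\ref{claim:Xi} and Corollary~\ref{cor:Yi}.

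First I would analyze the weighted average bias among the initial opinions of all agents that act as senders during phase~$i$, namely $\bar{\epsilon}:=\sum_{j=0}^{i-1} Y_j\epsilon_j/X_{i-1}$. Keeping only the $j=i-1$ contribution and invoking the inductive hypothesis gives $\bar{\epsilon}\geq (Y_{i-1}/X_{i-1})\cdot \epsilon^i/2$. The sharpened lower bound $Y_{i-1}\geq (1-O(1/2^{i-1}))\beta X_{i-2}$ extracted from Equation~\ref{eq:Yi}, combined with $X_{i-1}=X_{i-2}+Y_{i-1}$, yields $Y_{i-1}/X_{i-1}\geq 1-O(1/\beta)$, so $\bar{\epsilon}\geq (1-O(1/\beta))\,\epsilon^i/2$. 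Next I would use symmetry to argue that, conditional on being activated, each newly activated agent's chosen message is uniform over the $X_{i-1}\beta$ messages sent during phase~$i$; a uniformly chosen such message carries $\cO$ with probability $(1/2+\bar{\epsilon})(1/2+\epsilon)+(1/2-\bar{\epsilon})(1/2-\epsilon)=1/2+2\epsilon\bar{\epsilon}$. Hence $\mathbb{E}[Z_i\mid Y_i]\geq Y_i\bigl(1/2+(1-O(1/\beta))\epsilon^{i+1}\bigr)$, which comfortably exceeds the target $Y_i(1/2+\epsilon^{i+1}/2)$ once $\beta$ is a sufficiently large constant multiple of $1/\epsilon^2$.

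Then I would apply Chernoff's inequality (Equation~\ref{eq:chernoff}) with relative deviation $\delta=\Theta(\epsilon^{i+1})$ to conclude $Z_i\geq Y_i(1/2+\epsilon^{i+1}/2)$ with failure probability $\exp(-\Omega(Y_i\epsilon^{2i+2}))$. Using the refined growth bound $Y_i=\Omega(\beta^i X_0)$ (as established inside the proof of Claim~\ref{claim:Xi}) together with $X_0=\Omega(\log n/\epsilon^2)$, $\beta=\Theta(1/\epsilon^2)$ and $s=\Theta(1/\epsilon^2)$, one checks that $Y_i\epsilon^{2i+2}=\Omega(c_\beta^{i-1}\log n)$, which is $\Omega(\log n)$ for $c_\beta$ a sufficiently large constant. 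A union bound over $i\in\{0,1,\ldots,T+1\}$ and the previously assumed high-probability events then finishes the argument.

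The hardest part will be formally justifying the application of Chernoff's inequality to $Z_i$: although the $Y_i$ newly activated agents choose disjoint messages (since each message has a unique recipient), two distinct agents can receive messages from the same sender, so their initial-opinion indicators are not fully independent but rather coupled through that sender's (fixed) opinion. I would resolve this by conditioning on all sender opinions and on the random routing of messages; after this conditioning, each agent's initial opinion is a deterministic function of an independent noise bit attached to its own chosen message, so the resulting Bernoulli indicators are (at worst) negatively correlated in the sense of Section~\ref{sec:chernoff}, and Chernoff's inequality continues to apply.
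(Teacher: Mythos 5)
Your overall route is the same as the paper's: induct on $i$ with Claim~\ref{claim-phase0} as the base case, lower-bound the probability that a newly activated agent is initially correct by $(1/2+\phi)(1/2+\eps)+(1/2-\phi)(1/2-\eps)=1/2+2\eps\phi$ where $\phi$ is the bias of the sending population, and then apply Chernoff's inequality with $\delta=\Theta(\eps^{i+1})$ together with the growth of $Y_i$ from Corollary~\ref{cor:Yi}. Your treatment of the senders' bias via the weighted average $\bar{\eps}$ is sound, and in fact slightly more careful than the paper's (which directly takes the cumulative bias of the $X_{i-1}$ activated agents to be at least $\eps^i/2$); note that you could shortcut your $Y_{i-1}/X_{i-1}\geq 1-O(1/\beta)$ computation by observing that the inductive hypothesis gives $\eps_j\geq\eps^{j+1}/2\geq\eps^{i}/2$ for \emph{every} layer $j\leq i-1$, so the weighted average over all senders is at least $\eps^i/2$ outright.

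The one step that does not quite close is the dependence issue you flag at the end. If you condition on the routing (on which sender's message each newly activated agent adopts), the correctness indicators do become independent, but their conditional mean is then a function of the routing: per agent it equals $1/2+2\eps(q-1/2)$, where $q$ is the \emph{realized} fraction of adopted messages originating from correct senders, not the population quantity $1/2+\bar{\eps}$. Chernoff applied after this conditioning therefore concentrates $Z_i$ around a random center, and you still owe a second concentration step showing that, w.h.p., $q$ is at least $1/2+\bar{\eps}$ up to a constant-factor loss in $\bar{\eps}$ (this follows from negative association of sampling without replacement, since within a single round distinct receivers adopt messages from distinct senders). The paper sidesteps the two-stage argument by working with the unconditional indicators: within each round they are negatively correlated (a without-replacement sample of sender opinions composed with independent noise), and across rounds they are independent, so the Chernoff bound of Section~\ref{sec:chernoff} applies directly around the unconditional mean $1/2+2\eps\phi$. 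Either repair is routine, but as written your final paragraph proves concentration around the wrong (conditional) mean.
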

\begin{proof} We prove the claim by induction on  $i$. The basis of the induction is $i=0$, which has already been established in Claim~\ref{claim-phase0}. Consider now phase~$i$, where $1\leq i \leq T+1$. By the induction hypothesis, we can assume that w.h.p. $\epsilon_{i-1}\geq  \epsilon^{i}/2$.
Fix a configuration at the end of phase $i-1$ for which $\epsilon_{i-1}\geq  \epsilon^{i}/2$, and let $\phi=\epsilon_{i-1}$.
Thus, the fraction of initially correct agents among the $X_{i-1}$ activated agents in the beginning of phase $i$ is $1/2+\phi\geq 1/2+\epsilon^i/2$.
For any of the newly activated agents $a$ in  phase~$i$, the probability that the initial opinion of $a$ is correct  is at least:
$$(1/2+\phi)\cdot (1/2+\eps) + (1/2-\phi)\cdot (1/2-\eps)~=~ 1/2+2\eps\phi.$$
By linearity of expectation, this equation
implies that  $E(Z_i)\geq(1/2+2\eps\phi)Y_i \geq(1/2+\eps^{i+1})Y_i~.$
Taking $\delta=\eps^{i+1}/2$ gives $(1-\delta)E(Z_{i})>Y_{i}(1/2+ \epsilon^{i+1}/2)$.

For any given round $j$ of phase $i$, let $Y_{i,j}$ denote the set of agents that received a messages in round $j$, and furthermore, decided to set their initial opinion according the message received in that round.
The random variables indicating which of the agents in $Y_{i,j}$ has the correct initial opinion are  negatively-correlated since the corresponding samples are taken without replacement (see Section \ref{sec:chernoff}). Between different rounds of the phase, these random variables are furthermore independent. Hence, overall, the random variables indicating which of the agents in  $Y_i=\cup_j Y_{i,j}$  has the correct initial opinion are negatively-correlated. 
This allows us to apply Chernoff's inequality which together with the lower bound on $Y_i$ from Corollary \ref{cor:Yi} gives that:
$$
\Prob[Z_i<Y_{i}(1/2+ \epsilon^{i+1}/2)]\leq e^{-\delta^2 E(Z_i)/2}<e^{-\delta^2 Y_i/4}= e^{-\eps^{2i+2} Y_i/16}=1/e^{\Omega(\eps^{2i} \beta^{i-1} \log n)}.
$$
 Taking $\beta> 3/\epsilon^2$ to be sufficiently large therefore implies that, w.h.p., we have $Z_i\geq Y_{i}(1/2+ \epsilon^{i+1}/2)$, or in other words,  $\epsilon_{i}\geq  \epsilon^{i+1}/2$. 
 \end{proof} 
 Claim  \ref{claim:grow}
together with the definition of $T$ and the fact that $\beta>1/\eps^2$
 imply that w.h.p., the fraction of initially correct agents at the end of Stage I  is at least
$$1/2+\eps^{T+2}/2= 1/2 + \Omega({\sqrt{{\log n}/{n}}}),$$
completing the proof of Lemma \ref{lem:main}.

\subsection{Stage II: Boosting the bias}
We have proved that, w.h.p., at the end of Stage I  all agents are activated and the bias of correct agents is at least $\delta_1$, where $\delta_1=\Omega(\sqrt{\log n/n})$.
Stage~II  is meant to gradually boost the bias towards the correct opinion, so that, w.h.p., it will equal 1 (that is, all agents are correct) at the end of the stage. For that purpose we use standard techniques of repeatedly taking majority, see, e.g., \cite{Becchetti,Doerr}. We note however that our setting is different than those used in previous papers, mainly because we assume noise in communication. The difficulties resulting from noise   
required us to come up with an analysis that uses somewhat different arguments than the ones used in previous majority-based papers. 

\subsubsection{Intuition}

Stage~II is executed in $k+1$ phases, where $k=\lceil \log (1/\delta_1)\rceil=O(\log n)$.  Informally, phase $i$, for $1\leq i\leq k$, is associated with a parameter~$\delta_i$, such that it is guaranteed w.h.p., that when the phase starts, the fraction of correct agents is at least $1/2+\delta_i$. (Note that a sample from such a population is correct with strictly smaller probability than $1/2+\delta_i$, because of noise.)
Essentially, in phase $i$, each agent takes $\gamma=O(1/\eps^2)$ samples from the population (during $\gamma$ rounds) and then sets its opinion according to the majority opinion of these samples. Despite the noise in the samples, we will prove that, as long as $\delta_i$ is sufficiently small, this majority process increases  the fraction of correct agents, w.h.p.,  from $1/2+\delta_i$ to at least $1/2+2\delta_i$.
Moreover, we shall prove that if $\delta_i$ is large, then the  majority process does not decrease~$\delta_i$ too much. Hence, for the next phase, we can safely assume that either $\delta_{i+1}=2\delta_i$ or that~$\delta_{i+1}$ is already sufficiently large.

To establish the required boosting, the fact that $\delta_i$ may be very small prevented us from directly applying  Chernoff's inequality. To see why, let us consider the simpler noiseless case
($\epsilon=1/2$). In this case,  each agent receives  $\gamma=O(1)$ samples, each of which is correct with probability $1/2+\delta_i$. We want the majority of these samples to be correct. That is, we want that the number $X$ of correct samples would be at least~$\gamma/2$. Note that if $\delta_i$ is very small, then the expected number of correct samples is only slightly larger than $\gamma/2$, specifically, $E(X)=\gamma(1/2+\delta_i)$. Now recall that Chernoff's inequality states that $\Pr(X> (1-\delta)E(X))\geq 1- \mbox{exp}({-\delta^2 E(X)/2}).$ Since we aim to bound $\Pr(X>\gamma/2)$  using this inequality, we need to take $\delta$ such that $\gamma/2\leq (1-\delta)E(X)= \gamma(1-\delta)(1/2+\delta_i)$, which amount to choosing $\delta=O(\delta_i)$. But with this choice of $\delta$, Chernoff's inequality only tells us that  $\Pr(X>\gamma/2)>1-\mbox{exp}({-O(\delta_i^2)})$, which is meaningless when~$\delta_i$ is very small (since this lower bound is even smaller than $1/2$). 

The aforementioned reasoning required us to come up with more involved arguments.
To lower bound the probability that the majority opinion in the $\gamma$ samples is correct, we perceive the
 samples as obtained by an imaginary process composed of two steps taken over $\gamma$
 players. In the first step, for each player we flip a {\em fair} coin which determines its opinion (i.e., probability $1/2$ for having each opinion). Then, at the second step, each of the players with the wrong opinion,  (independently) has a small probability (close to $\epsilon\delta_i$) of flipping its opinion to the correct one. The parameters are chosen such that at the end of this imaginary process, the probability that the majority opinion among the  $\gamma$ players is
correct  is the same as  probability that the majority opinion in the original $\gamma$ samples is correct. To bound the latter probability, we thus analyze the imaginary two-step process.

 Informally, the imaginary process  allows us to understand the situation in a more modular manner. Indeed, the probability that the first step is successful (yielding a correct majority) is precisely~1/2, and once the first step  is successful, the second step cannot harm the situation (because in the latter step, only wrong players can change their opinion). 
 The probability of being correct after the two-step process is thus $1/2$ plus the probability of obtaining  a wrong configuration in the first step and fixing it in the second step.
 
Let us dwell a bit into this later probability.  If the first step turns out to be unsuccessful, then before the second step starts there are $\gamma/2+x$ wrong players and $\gamma/2-x$ correct ones, for some integer $x$. When~$x$ is small,  Stirling's formula comes handy for bounding from below the probability that such a situation occurs after the first step. Specifically, this probability is  $\Omega(x/\sqrt{\gamma})$. For such a situation to be fixed, we need that in the second step, at least $x+1$ wrong players flip their opinion. 
 Depending on the particular value of~$\delta_i$, we choose a different value for $x$, and carefully analyze the probability of having a corrective event in the second step. For example, as mentioned, the probability that the second step starts with $\gamma/2+1$ wrong players and $\gamma/2-1$ correct ones (a bias of one player to the wrong opinion) is  $\Omega(1/\sqrt{\gamma})=\Omega(\epsilon)$. In this case ($x=1$), the corrective event amounts to having  one wrong player among the $\gamma/2+1$  wrong players changing its opinion in the second step. If $\delta_i$ is very small, this happens with probability roughly $\gamma\cdot \epsilon \delta_i=O({\delta_i}/{\epsilon})$. Furthermore, for sufficiently small $\delta_i$, the constant factors hidden in the aforementioned $\Omega$ and $O$ notations, turn out to be such that, the probability of having both a bias of one player to the wrong opinion in the first step and a corrective event in the second step is at least $4\delta_i$. Together with the probability (at least~$1/2$) that the first step yielded the correct majority opinion to begin with, we get that the probability of having a correct opinion after the second step is at least $1/2+4\delta_i$. Recall, that example was with respect to $\delta_i$ being very small. 
In general, regardless of the value of $\delta_i$, our analysis makes  sure that the  majority is correct with probability  $\min\{1/2+2^5\delta_i,~5/9\}$.

A direct application of Chernoff's inequality, relying on  the fact that $\delta_i=\Omega(\sqrt{\log n/n})$, will then  show that w.h.p., the bias increases from $\delta_i$ at phase $i$ to at least $\min\{2^3\delta_i,~1/40\}$ at phase $i+1$. Hence,
after invoking $k=\lceil \log (1/\delta_1)\rceil=O(\log n)$ phases,   the fraction of correct agents becomes  bounded away from 1/2 by an additive constant. Hence, to achieve high probability that all agents are correct, it is  sufficient that in the last phase, namely   phase $k+1$, each agent  takes $O(\frac{1}{\epsilon^2}\log n)$ samples of the population, and sets its opinion according to the majority opinion in these samples.

\subsubsection{Formal description of Stage II}
As guaranteed by Lemma \ref{lem:main}, at the end of Stage I, w.h.p., all agents are activated and the bias of their initial opinion towards $\cO$ is $\Omega({\sqrt{{\log n}/{n}}})$. Hence, Stage I  brings us to an instance of the majority-consensus  problem, where the set~$A$ contains the whole population and the majority-bias is $\Omega({\sqrt{{\log n}/{|A|}}})$. Stage II  is meant to solve this problem.

Let $r=\lceil 2^{22}/\eps^2\rceil$, and let $\gamma=2r+1$  (no attempt has been made to minimize the constant factors). We define $k=O(\log n)$ and take Stage II  to be composed of $k+1$ phases. Each of the first $k$ phases has $2\gamma=O(1/\eps^2)$ rounds, while phase $k+1$ is composed of $O(\frac{1}{\eps^2}\log n)$ rounds. Essentially, in each phase, agents repeatedly send their current opinion. At the end of the phase, agents may choose to~update~their~opinion. Since the opinion of an agent may be updated only at the end of a phase, all messages sent by an agent  during any given phase are the same.
For a phase $i$, let $m_i$ denote the number of rounds in the phase (i.e.,  $m_i=2\gamma$ for $i=1,\ldots,k$, and $m_{k+1}=O(\frac{1}{\eps^2}\log n)$). During  phase $i$, an agent that received at least $m_i/2$ messages is called {\em successful} and the messages it received are called {\em samples}. 
Only the successful agents will update their opinion at the end of the phase, while the rest will remain with their previous opinion.

\begin{claim}\label{claim:number}
The number of successful agents in each phase is, w.h.p., at least $n/2$. 
\end{claim}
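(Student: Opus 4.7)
The plan is to verify that the total number of successful agents $S$ is, w.h.p., at least $n/2$ by first lower-bounding the per-agent success probability and then applying a Chernoff-type concentration argument that accounts for the dependence across agents via negative association. By Lemma~\ref{lem:main}, at the start of Stage~II every agent is activated, and according to the description of the stage each of the $n$ agents transmits in every round of the current phase. Fix a phase of length $m$ (so $m = 2\gamma \geq 2^{23}$ for $i \leq k$, or $m = \Theta(\frac{1}{\eps^2}\log n)$ for $i = k+1$). For each agent $a$ and round $r$, let $X_{a,r}$ be the indicator that $a$ receives at least one message in round $r$; since the $n$ destinations of the round are independent and uniform, $\Pr(X_{a,r}=1) = 1 - (1-1/n)^{n-1}$, which tends to $1 - 1/e$ and in particular exceeds $p := 5/8$ for all sufficiently large $n$.

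For a fixed agent $a$ the variables $(X_{a,r})_{r=1}^m$ are independent across rounds, so $N_a := \sum_r X_{a,r}$ is a sum of $m$ independent Bernoullis with $E(N_a) \geq p m$. Chernoff's inequality~(\ref{eq:chernoff}) then yields $\Pr(N_a < m/2) \leq \exp(-\Omega(m))$, and because $m \geq 2^{23}$ this is at most an arbitrarily small constant $\eta$ (say $\eta = 1/100$), provided the constant $2^{22}$ in the definition of $\gamma$ is chosen sufficiently large. Consequently $\Pr(S_a = 1) \geq 1-\eta$, where $S_a$ is the indicator that $a$ is successful, so $E(S) \geq (1-\eta)n$ with $S = \sum_a S_a$.

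The last step is to concentrate $S$. For each fixed round $r$, the vector $(X_{a,r})_a$ arises from a balls-into-bins occupancy and is therefore negatively associated; across distinct rounds these vectors are mutually independent. Since negative association is preserved under taking sums over disjoint index sets and under coordinatewise monotone functions, $(S_a)_a$ is negatively associated, so the Chernoff bounds recalled in Section~\ref{sec:chernoff} apply to $S$. This gives $\Pr(S < n/2) \leq \Pr\bigl(S < \tfrac{51}{100} E(S)\bigr) \leq \exp(-\Omega(n))$, and a union bound over the $O(\log n)$ phases completes the argument.

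The main obstacle will be the passage from per-agent success to a population-wide guarantee when $\eps$ is bounded away from zero: in that regime $m$ is only a large constant, so the per-agent failure probability $\eta$ is also a positive constant and no naive union bound over agents yields high-probability success of all $n$ of them simultaneously. Negative association of the balls-into-bins nonemptiness indicators is precisely what sidesteps this, by concentrating the total $S$ directly without asking each individual $S_a$ to equal $1$ with high probability.
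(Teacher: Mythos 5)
Your proof is correct and follows essentially the same route as the paper's: a per-agent Chernoff bound over the rounds of the phase to make the individual failure probability an arbitrarily small constant, followed by a Chernoff bound on the sum of the (negatively associated) success indicators to concentrate the total around $(1-\eta)n$. The only difference is that you spell out the justification for negative association (balls-into-bins occupancy, closure under disjoint sums and monotone maps), which the paper merely asserts.
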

\begin{proof}
In a given round, the probability that  a given agent $a$ did not receive a message is $(1-1/n)^{n-1}\leq  1/2$. Thus, the expected number of messages received by agent $a$ in a given phase $i$ is $E_i\geq m_i/2$. By choosing~$m_i$ large enough,  Chernoff's inequality can be used to guarantee that  the probability that agent~$a$ is unsuccessful is smaller than $c$, where $c$ is as small as we want constant. The expected number of unsuccessful agents is therefore at most $cn$. As the random variables indicating whether an agent is unsuccessful or successful are negatively-correlated,
 we can employ Chernoff's inequality (see Sectione \ref{sec:chernoff}), to deduce that w.h.p., the number of successful agents in a phase is at least $n/2$. 
\end{proof}

\begin{framed}
\noindent{\bf The rule of Stage II:}
For each round in each phase $i$, where $1\leq i\leq k+1$, each agent repeatedly sends out its current  opinion. The opinion of an agent in phase 1 of Stage II  is its initial opinion.   At the end of each phase, a successful agent $a$ in the phase will consider its 
set of samples $S_a$, will select uniformly at random an arbitrary subset $S'_a\subseteq S_a$ containing precisely $m_i/2$ samples, and update its opinion according to the majority opinion in the samples in $S_a$. An unsuccessful agent does not change its opinion during the phase.
\end{framed}
\begin{remark}\label{remark2} 
We have chosen to let a successful agent choose an arbitrary subset of size $m_i/2$ among its samples, and update its opinion according to the majority opinion in this set. For the purposes of this current section, where a global clock is assumed, all proofs would have carried out in the same manner, had we chosen instead, to set this subset as the particular subset containing the first $m_i/2$ samples. Similarly to Remark \ref{remark1}. The reason for choosing an arbitrary random subset of this size is to guarantee that the order in which the agent receives the samples during the does not influence its actions. This property will be more important in Section, which relaxes the synchronization requirement.
\end{remark}

\begin{lemma}\label{lem:second}
Consider  taking $\gamma=2r+1$ (noisy) samples from a population whose  bias towards the correct opinion  is  at least $\delta$. Then, the probability  that the majority of these $\gamma$ samples is correct  is at least
$ \min \{1/2+4\delta, ~1/2 + 1/100\}.$
	\end{lemma}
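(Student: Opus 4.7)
The plan is to recast the sampling experiment as an equivalent two-step process and lower bound the probability that Step~2 rescues an initially-wrong majority. Each of the $\gamma=2r+1$ samples is independently correct with probability
\[
p \;=\; (1/2+\delta)(1/2+\eps) + (1/2-\delta)(1/2-\eps) \;=\; 1/2 + 2\eps\delta.
\]
I will replace this by the following coupled process: (Step 1) label each sample ``correct'' or ``wrong'' by an independent fair coin; (Step 2) each ``wrong'' sample is independently relabeled ``correct'' with probability $q := 4\eps\delta$. The marginal probability of being correct at the end is $1/2+q/2=1/2+2\eps\delta$, and samples remain independent, so this process has the same joint distribution as the original.

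Let $W$ be the number of ``wrong'' samples after Step~1. By symmetry of a fair coin flip on an odd number of samples, $\Pr(W\le r)=1/2$, and whenever $W\le r$ the majority is already correct and Step~2 can only help. Writing $x := W-r$ for the deficit, I obtain
\[
\Pr(\text{majority correct}) \;\ge\; \tfrac{1}{2} \;+\; \sum_{x=1}^{r+1}\Pr(W=r+x)\,\Pr\bigl(\mathrm{Bin}(r+x,q)\ge x\bigr).
\]
The task reduces to showing this sum is at least $4\delta$ whenever $\delta\le 1/400$, which I will do by choosing a value of $x$ tailored to $\delta$. In the very small regime where $(r+1)q\le 1$, retaining only the $x=1$ term suffices: Stirling's formula gives $\Pr(W=r+1)=\binom{2r+1}{r+1}2^{-(2r+1)}=\Omega(1/\sqrt{r})$, and the linearization $1-(1-q)^{r+1}\ge (r+1)q/2$ then makes the $x=1$ contribution at least
\[
\Omega\bigl(\tfrac{1}{\sqrt{r}}\cdot (r+1)q\bigr) \;=\; \Omega(\sqrt{r}\,\eps\,\delta) \;=\; \Omega(\delta),
\]
with an explicit constant that, given $r\ge 2^{22}/\eps^2$, comfortably exceeds $4$. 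For the remaining sub-regime $(r+1)q>1$ within $\delta\le 1/400$, I will instead pick $x\approx rq = 4r\eps\delta$: at this $x$ one has $\Pr(\mathrm{Bin}(r+x,q)\ge x)\ge 1/3$ by a standard Chernoff estimate, while $\Pr(W=r+x)$ remains $\Omega(1/\sqrt{r})$ since $x=O(\sqrt{r})$ in this range, so the contribution is $\Omega(1/\sqrt{r})=\Omega(\eps)\gg 4\delta$.

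For $\delta>1/400$, I will invoke a standard monotonicity coupling: a Bernoulli$(p_1)$ variable can be coupled with a Bernoulli$(p_2)$ variable for $p_1\le p_2$ so that outcome $1$ in the former forces outcome $1$ in the latter; consequently $\Pr(\text{majority correct})$ is nondecreasing in $\delta$, so the bound at $\delta=1/400$ already yields the uniform lower bound $1/2+1/100$ matching the second term in the $\min$. The main difficulty I anticipate is bookkeeping constants carefully enough across Stirling's estimate and the two sub-regimes to extract the clean coefficient $4$ rather than some weaker $c\delta$; the large constant $2^{22}$ hidden in $r$ is exactly what provides this slack.
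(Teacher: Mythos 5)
Your outline follows the paper's proof almost exactly: the fair-coin-plus-correction coupling, conditioning on the deficit $x=W-r$ after the first step, Stirling's formula for the point probabilities, and a case split on $\delta$. The small-$\delta$ case and the monotonicity reduction for $\delta>1/400$ are sound. The gap is in your middle regime $(r+1)q>1$, $\delta\le 1/400$. There you keep only the single term $x=\lceil rq\rceil$ of your sum and bound its contribution by $\Pr(W=r+x)\cdot(1/3)=\Omega(1/\sqrt{r})=\Omega(\eps)$, asserting that this is $\gg 4\delta$. That assertion is false whenever $\delta\gg\eps$, which can happen throughout this regime: the paper only assumes $\eps>n^{-1/2+\eta}$, so you may have, say, $\eps=10^{-3}$ and $\delta=1/400$, where your bound gives at most about $\eps/30\approx 3\cdot 10^{-5}$ while $4\delta=10^{-2}$. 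A second, related failure: for $\delta$ near $1/400$ you have $x\approx 4r\eps\delta=2^{24}\delta/\eps\approx 2^{13}\delta\cdot\sqrt{r}$, which exceeds $\sqrt{r}$ by a large constant factor, so $x$ is \emph{not} $O(\sqrt{r})$ there and $\Pr(W=r+x)$ is exponentially small in that constant rather than $\Omega(1/\sqrt{r})$. Since your $\delta>1/400$ case leans on the bound $1/2+4\delta$ holding at $\delta=1/400$, it inherits the same problem.

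The missing idea — and the one the paper uses — is to work with the cumulative event $\{r+1\le W\le r+x\}$ rather than the point mass at $W=r+x$: any deficit $j\le x$ is corrected once at least $x$ players flip in the second step, so the rescue probability is at least $\Pr(r+1\le W\le r+x)\cdot\Pr(F_x\mid U_x)$ with $\Pr(r+1\le W\le r+x)=\sum_{j=1}^{x}\Pr(W=r+j)\ge x/(10\sqrt{r})$ for $x\le\sqrt{r}$. The extra factor of $x=\Theta(r\eps\delta)$ turns the contribution into $\Omega(\eps\delta\sqrt{r})=\Omega(2^{11}\delta)$, which does dominate $4\delta$. For $\delta$ large enough that $rq$ exceeds $\sqrt{r}$ you must additionally cap the choice at $x=\Theta(\sqrt{r})$; then $x/(10\sqrt{r})=\Theta(1)$ and you obtain the absolute constant $1/2+1/100$ directly — this is precisely the paper's separate ``large $\delta$'' case. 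With those two repairs your argument coincides with the paper's.
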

\begin{proof}
Consider the $\gamma=2r +1$ samples. We say that a sample is {\em correct} if it holds the correct opinion~$\cO$. The $\gamma$ samples are chosen independently, and  uniformly at random, among the population whose bias towards the correct opinion is at least $\delta$. Let $b=2\eps\delta$.
Accounting for the noise in the samples, for each sample, the probability that the sample is correct  is at least:
$$(1/2+\delta)\cdot (1/2+\eps) + (1/2-\delta)\cdot (1/2-\eps)~=~ 1/2+2\eps\delta~=~1/2+b.$$
Note that $b$ may be very small, so directly employing Chernoff's inequality over the $\gamma$ samples would not imply the desired bound.
Instead, let us look at the following imaginary two-step process that forms an equivalent view of the $\gamma$ samplings.

\paragraph{The imaginary two-step process:}
The imaginary process is performed over a set $S$ consisting of $\gamma$ Boolean {\em players}, namely, $$S=\sigma_1,\sigma_2,\ldots,\sigma_{\gamma}.$$
\begin{itemize}
\item
{\bf First step:} each player~$\sigma_j$ flips a fair coin to form an initial opinion (i.e., a bit in $\{0,1\}$).
\item
{\bf Second step:} independently with probability $2b$, each player $\sigma_j$ gets to see the correct opinion $\cO$ and corrects its opinion if it was wrong initially (otherwise it remains with its correct opinion).
\end{itemize}
  Note that after this two-step process, the probability that a player is correct is precisely $1-\frac{1}{2}(1-2b)=1/2+b$. Thus, the probability that the majority opinion among the $\gamma$ players  is $\cO$ bounds from below the probability that the majority of the original $\gamma$ samples gathered by agent~$a$ is $\cO$. To lower bound this latter probability, in what follows, we focus on the $\gamma$ players, in the two-step process.
Let $x$ be a positive integer. Define the following events.
\begin{itemize}
\item
C = at the end of the first step, the majority of players in $S$ is correct.
\item
$U_x$ = after the first step, the number $w$ of wrong players in $S$  satisfies  $r+1\leq w\leq r+x$.
\item
$F_x$ = in the second step, the number of opinion flips is at least $x$.
\item
F = the majority opinion at the end of the two-steps is correct.
\end{itemize}
Our goal is to lower bound the probability that $F$ occurs. Note first that $\Pr(C)=1/2$.
Assume now that $C$ did not occur, hence $U_x$ occurred for some $x$, that is,   in $S$, the first step results in a set $W$ of wrong players whose size $w$ satisfies   $r+1\leq w\leq r+x$. In this case, for $F$ to occur, it is sufficient that  event  $F_x$ would occur in the second step.   
That is, for every positive integer $x$, we have:
\begin{equation}\label{eq:FF}
\Pr(F)\geq \Pr(C)+\Pr(F_x\mid U_x)\cdot \Pr(U_x).
\end{equation}
Stirling's formula can be used to lower bound the probability that $U_x$ occurs, when $x$ is a small integer. The bound  is indicated by the following claim:
\begin{claim}\label{claim:sterling}
For $1\leq x\leq \sqrt{r}$, we have $
\Pr(U_x)> x/10\sqrt{r}.
$
\end{claim}
\begin{proof}
For each $j$, let $P(j)$ denote the probability that precisely~$j$
players in $S$ hold the wrong opinion after the first step. We rely on the fact that the coins tossed in the first step are fair, and on Stirling's formula to  show that for $1\leq i\leq \sqrt{r}$, we have $P(r+i)>1/10\sqrt{r}$.
This will establish the claim since  for $x\leq \sqrt{r}$, the probability that Event ${U}_x$ occurs is $\Pr(U_x)= \sum_{i=1}^x P(r+i)> x/10\sqrt{r}.$

The bound on $P(r+i)$ can be obtained as follows:
\begin{align*}
P(r+i)&=2^{-(2r+1)}\binom{2r+1}{r+i} =2^{-(2r+1)}\frac{(2r+1)!}{(r-i+1)!(r+i)!}\geq\\
&\geq 2^{-(2r+1)}\frac{(2r+1)!}{(r-\sqrt{r}+1)!(r+\sqrt{r})!}.
\end{align*}
Applying Stirling's formula $\sqrt{2\pi} \leq \frac{n!}{e^{-n} \cdot n^{n+0.5}} \leq e$
on the right side of the equation, we get as desired:
\begin{align*}
P(r+i) &>  \frac{\sqrt{2\pi}}{e^2} \cdot \frac{2^{-(2r+1)}(2r+1)^{2r+1.5}}{(r-\sqrt{r}+1)^{r-\sqrt{r}+1.5}(r+\sqrt{r})^{r+\sqrt{r}+0.5}} \\
&= \frac{2\sqrt{\pi}}{e^2} \cdot \frac{r^{-(2r+1.5)}(1+0.5/r)^{2r+1.5}}{(r-\sqrt{r}+1)^{r-\sqrt{r}+1.5}(r+\sqrt{r})^{r+\sqrt{r}+0.5}} \\
&= \frac{2\sqrt{\pi}}{e^2\sqrt{r}} \cdot \frac{r^{-(2r+1.5)}(1+0.5/r)^{2r+1.5}}{(r-\sqrt{r})^{r-\sqrt{r}+1}(r+\sqrt{r})^{r+\sqrt{r}+0.5}}  \cdot \frac{1}{(1+\frac{1.01}{r})^{r-\sqrt{r}+1}}\\
&= \frac{2\sqrt{\pi}}{e^2\sqrt{r}} \cdot \frac{1 \cdot e}{(1-\frac{1}{\sqrt{r}})^{r-\sqrt{r}+1}(1+\frac{1}{\sqrt{r}})^{r+\sqrt{r}+0.5}}  \cdot \frac{1}{e^{1.01}}\\
&= \frac{2\sqrt{\pi}}{e\sqrt{r}} \cdot \frac{1}{(1-\frac{1}{r})^{r-\sqrt{r}+1}(1+\frac{1}{\sqrt{r}})^{2\sqrt{r}-0.5}}  \cdot \frac{1}{e^{1.01}}\\
&=  \frac{2\sqrt{\pi}}{e\sqrt{r}} \cdot \frac{1}{e^{-0.99} \cdot e^2 \cdot e^{1.01}} = \frac{2 \sqrt{\pi}}{e^{3.02}\sqrt{r}} > \frac{1}{10 \sqrt{r}}.\\
\end{align*}
\end{proof}


%
To successfully use Equation \ref{eq:FF},
we need to bound from below the value of $\Pr(F_x\mid U_x)$, that is, the probability that given $U_x$, at least $x$ players (in $W$) flip their opinion in the second step. 
\begin{claim}\label{claimPF}
(1) If $r\leq 2/b$ then $\Pr(F_1\mid U_1)\geq r b / e^4$.
(2) If $rb> 2$, then for   $x\leq \lceil rb\rceil$,  $\Pr(F_{x}\mid U_x)\geq 1/3$.
\end{claim}
\begin{proof}
Recall  that in the second step, each of the wrong players flips its opinion with probability $2b$.  Observe that $\Pr(F_1\mid U_1)$ is bounded from below by the probability that precisely one of the $r+1$ wrong players in~$W$ flipped its opinion in the second step (note, $|W|=r+1$ since $U_1$ occurred). This latter probability is $(r+1) \cdot 2b (1 - 2b)^{r}$, which is at least $r b / e^4$, if $r\leq 2/b$. This establishes the first part of the claim. Let us now turn to prove the second part of the claim. Assume that $rb>2$. Note that the expected number of flips in $W$ is at least $2rb>4$. Chernoff's inequality therefore implies that the probability that the number of flips in $W$ is at most $rb$ is bounded from above by $1/e^{1/2}$, implying that for integer  $x\leq \lceil rb\rceil$, we have: $\Pr(F_{x}\mid U_x)\geq\Pr(F_{\lceil rb\rceil}\mid U_x)\geq 1-1/e^{1/2}> 1/3~.$
\end{proof} 


\noindent Finally, to establish   Lemma~\ref{lem:second}, we combine Equation~\ref{eq:FF} with Claims~\ref{claim:sterling} and \ref{claimPF} for different values of~$\delta$.

\medskip

\noindent{\bf The case of small $\delta$:}
Consider  the case   that $\delta\leq \eps/2^{20}$. This restriction on $\delta$ implies that $rb\leq 2$. In this case, the first part of Claim~\ref{claimPF} tells us that  $\Pr(F_1\mid U_1)\geq r b / e^4$. Hence, by Claim~\ref{claim:sterling} and Equation~\ref{eq:FF}, we have:
$
\Pr(F)\geq \Pr(C)+\Pr(U_1)\Pr(F_1\mid U_1)>1/2+ (1/10\sqrt{r}) (r b / e^4)>1/2+4\delta.
$

\medskip

\noindent{\bf The case of medium  $\delta$:} Consider  the case   that $\eps/2^{20}<\delta<1/2^{12}$. In this case, we have
$4<2rb \leq 2(\sqrt{r}-1)$. Let us set $x:=\lceil rb\rceil$. Hence, $1\leq x\leq \sqrt{r}$, and we can employ Claim~\ref{claim:sterling}, yielding $\Pr(U_x)> x/25\sqrt{r}$. By the second part of Claim \ref{claimPF}, we obtain:
$\Pr(F)\geq \Pr(C)+\Pr(U_x)\cdot\Pr(F_x\mid U_x)\geq 1/2+  (x/10\sqrt{r})/3\geq 1/2+b\sqrt{r}/30>1/2+4 \delta.
$

\medskip

\noindent{\bf The case of large  $\delta$:} Consider  the case   that $\delta\geq 1/2^{12}$. In this case, we
set  $x:=\lceil \sqrt{r}/3\rceil$. Since
$\lceil\sqrt{r}/3\rceil< \lceil rb\rceil$, the second part of Claim \ref{claimPF} implies that $\Pr(F_x\mid U_x)\geq 1/3$. Hence, we get:
$
\Pr(F)\geq \Pr(C)+\Pr(U_x)\cdot\Pr(F_x\mid U_x)\geq 1/2+   x/30\sqrt{r} \geq 1/2+1/100.
$
This completes the proof of Lemma~\ref{lem:second}.
\end{proof}
Lemma \ref{lem:second} provides a lower bound on the probability that a successful agent is correct at the end of a phase. 
We are now ready to bound from below the increase in bias that a phase guarantees. 
\begin{lemma}\label{lem:phase}
 Consider phase $i\leq k$, and assume that the number of correct agents in the beginning of the phase is $1/2+\delta_i$, where  $\delta_i>c(\sqrt{\log n/n})$, for sufficiently large constant $c$.  Then, w.h.p., the fraction of correct agents at the end of the phase is at least
$ \min \{1/2+1.7\delta_i, ~1/2 + 1/800\}.$
\end{lemma}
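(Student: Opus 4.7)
The plan is to combine the single-agent bound of Lemma \ref{lem:second} with a concentration argument over the population of successful agents, and to bound the contribution of unsuccessful agents separately. Let $\delta := \delta_i$, and set $p := \min\{1/2 + 4\delta,\, 1/2 + 1/100\}$, so that by Lemma \ref{lem:second} every successful agent ends the phase holding opinion $\cO$ with probability at least $p$.

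First, I would invoke Claim \ref{claim:number} to obtain a set $S$ of successful agents with $|S| \geq n/2$ w.h.p., and let $\alpha := |S|/n$ and $U := [n] \setminus S$. Because the algorithm is symmetric, membership in $S$ is determined purely by the random communication pattern and is independent of opinions. Unsuccessful agents retain their pre-phase opinion, so the number of correct agents among $U$ at the end of the phase equals the number of correct agents in $U$ at the start. By symmetry and a standard hypergeometric Chernoff bound, this count is at least $(1/2 + \delta - \zeta)|U|$ w.h.p., for some $\zeta = O(\sqrt{\log n/n})$.

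Next, for each $a \in S$, let $X_a$ indicate that the updated opinion of $a$ is $\cO$. By Lemma \ref{lem:second}, $\Pr[X_a = 1] \geq p$. To control $\sum_{a \in S} X_a$, I would argue that the variables $\{X_a\}$ are negatively correlated in the sense of Section \ref{sec:chernoff}: within a round, the kept messages of different recipients are pairwise disjoint (each sender sends at most one message, and each recipient keeps at most one), so allocation of opinions to samples is of without-replacement type; across rounds, both the communication choices and the independent noise bits contribute fresh randomness. The negative-correlation Chernoff inequality then yields $\sum_{a\in S} X_a \geq (p - \zeta')|S|$ w.h.p., where $\zeta' = O(\sqrt{\log n/n})$, using crucially that $E[\sum_{a\in S} X_a] \geq p|S| \geq n/4$.

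Combining, the final fraction of correct agents is at least
\[
\alpha(p - \zeta') + (1-\alpha)(1/2 + \delta - \zeta) \;\geq\; \tfrac{1}{2} + \alpha(p - \tfrac{1}{2}) + (1-\alpha)\delta - O(\sqrt{\log n/n}).
\]
When $\delta < 1/2^{12}$, Lemma \ref{lem:second} gives $p - 1/2 \geq 4\delta$, so the bias is at least $(3\alpha + 1)\delta - O(\sqrt{\log n/n}) \geq 2.5\delta - O(\sqrt{\log n/n})$ since $\alpha \geq 1/2$; choosing the constant $c$ in $\delta_i > c\sqrt{\log n/n}$ sufficiently large forces the error term below $0.8\delta$, yielding the required $1/2 + 1.7\delta$. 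When $\delta \geq 1/2^{12}$, we have $p - 1/2 \geq 1/100$, so the bias is at least $\alpha/100 + (1-\alpha)\delta - O(\sqrt{\log n/n}) \geq 1/200 - O(\sqrt{\log n/n}) \geq 1/800$ for large $n$.

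The main obstacle is the third step: justifying that $\{X_a\}_{a \in S}$ satisfies a negative-correlation property strong enough for Chernoff, given that the samples arise from a shared random communication pattern and that $X_a$ depends on all $m_i/2$ samples of $a$ through a majority rule. The argument rests on the disjointness of kept messages within any round, the independence of the noise bits, and the independence of communication choices across rounds, placing the situation squarely inside the framework recalled in Section \ref{sec:chernoff}.
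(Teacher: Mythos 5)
Your proposal follows essentially the same route as the paper: condition on Claim~\ref{claim:number}, feed the per-agent bound of Lemma~\ref{lem:second} into a Chernoff bound for negatively-correlated indicators over the successful agents, treat unsuccessful agents as retaining a $(1/2+\delta_i)$-biased opinion, and combine the two contributions; the decomposition, the key lemma, and the concentration tool all coincide with the paper's. One concrete slip: your case boundary is misplaced. With $p=\min\{1/2+4\delta,\;1/2+1/100\}$ the inequality $p-1/2\geq 1/100$ holds only when $\delta\geq 1/400$, so for $\delta\in[2^{-12},1/400)$ your second case asserts something false about $p$ (there $p-1/2=4\delta<1/100$). The repair is immediate: run your small-$\delta$ computation for all $\delta<1/400$ (it goes through verbatim and still gives $2.5\delta-O(\sqrt{\log n/n})\geq 1.7\delta$ for large $c$), and invoke $p-1/2=1/100$ only for $\delta\geq 1/400$; this is exactly where the paper splits its cases. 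Two smaller points. Your claim that the number of initially correct agents in $U$ is at least $(1/2+\delta-\zeta)|U|$ with $\zeta=O(\sqrt{\log n/n})$ is not literally correct when $|U|\ll n$, since the hypergeometric relative deviation scales like $\sqrt{\log n/|U|}$; what you actually need (and what suffices after dividing by $n$) is the absolute deviation bound $O(\sqrt{|U|\log n})\leq O(\sqrt{n\log n})$ --- the paper sidesteps this entirely by padding the unsuccessful set with successful agents to size exactly $n/2$, so both halves have linear size. Finally, the negative-correlation justification you flag as the main obstacle (disjoint kept messages within a round, hence sampling senders without replacement, combined with independent noise bits and independent rounds) is indeed the intended argument; the paper asserts it with no more detail than you supply.
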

\begin{proof}
Fix a phase $i$, for $1\leq i\leq k$, and assume that when phase $i$ starts,  the fraction of agents having the correct opinion is  at least $1/2+\delta_i$.  Note that being successful in the phase is independent from having the correct opinion in the beginning of the phase. Since an unsuccessful agent does not change its opinion during the phase,  its probability of being correct at the end of the phase is therefore at least $1/2+\delta_i$. Moreover, these probabilities are negatively-correlated. 
On the other hand Lemma~\ref{lem:second} shows that each successful agent in the phase has a probability of at least $1/2+\min\{4\delta_i,1/100\}$ to be correct at the end of the phase. Moreover, the random variables indicating whether the successful agents are correct are again negatively-correlated.
We can thus argue about lower bounds for expectations first, then continue with related dominating negatively-correlated variables for which we finally apply standard Chernoff's bounds. 

In particular, we first consider the case that $\delta_i\geq 1/400$. In this case, for each agent (whether successful or unsuccessful) the probability of being correct is at least $1/2+1/400$ and thus dominated by a Bernoulli random variable with this expectation. As argued before these dominating variables are furthermore negatively-correlated. Let $I$ be the number of correct agents. We have $E(I)\geq n(1/2+1/400)$. Taking $\delta= 1/ 800$, we get that $(1-\delta)E(I)>n(1/2+1/800)$. Applying Chernoff's inequality to the dominating negatively-correlated random variables we obtain: $$\Pr(I \leq n(1/2+1/800))\leq e^{-\Omega(n\delta_i^2 )}.$$ Since $\delta_i>c(\sqrt{\log n/n})$ for sufficiently large $c$, it follows that, w.h.p., the fraction of correct agents at the end of the phase is at least $1/2 + 1/800$, as required by the lemma. 

Next, we consider the case that $\delta_i < 1/400$. Recall from Claim \ref{claim:number} that the number of successful agents in the phase is, w.h.p, at least~$n/2$. Condition on this event.  Recall also that each unsuccessful agent  is correct  with probability  $p_u=1/2+\delta_i$ and  each successful agent is correct  with probability  $p_s=1/2+\min\{4\delta_i,1/100\}=1/2+4\delta_i$. 

Let $u$ denote the number of unsuccessful agents. Recall that we condition on the highly likely event $u\leq n/2$. 
Let $U$ be the set containing all $u$ unsuccessful agents and additional $n/2-u$ other arbitrary successful agents. Note that $U$ contains precisely $n/2$ agents. Let $S$ be the set of the remaining agents (all of which are successful). 


Next, let us consider the number $I_u$ of incorrect agents in $U$. Whether or not a given agent in $U$ is successful, the probability that this agent is incorrect is dominated by a Bernoulli random variable with probability of $1/2-\delta_i$. Hence, the expectation of this number is $E(I_u)\leq \frac{n}{2}(1/2-\delta_i)$. Taking $\delta= \delta_i/ 10$, we get that $(1+\delta)E(I_u)\leq \frac{n}{2}(1/2-0.9\delta_i)$. With these dominating random variables again being negatively-correlated we apply Chernoff's inequality and obtain:
$$\Pr(I_u\geq \frac{n}{2}(1/2-0.9\delta_i))\leq e^{-{\delta^2 E(I_u)}/{3}}=e^{-\Omega(n\delta_i^2 )}.$$ 
Therefore, w.h.p., the number $I_u$ of incorrect unsuccessful players is at most $\frac{n}{2}(1/2-0.9\delta_i)$. 
We similarly bound the number $I_s$ of incorrect agents in $S$. In particular, we have $E(I_s)\leq \frac{n}{2}(1/2-4\delta_i)$. Taking $\delta=\delta_i$, we have $(1+\delta)E(I_u)>\frac{n}{2}(1/2-2.5\delta_i)$. Apply Chernoff's inequality to the dominating negatively-correlated variables gives: $$\Pr(I_s\geq \frac{n}{2}(1/2-2.5\delta_i))\leq e^{-{\delta^2 E(I_s)}/{3}}=e^{-\Omega(n\delta_i^2 )}.$$ 
Hence, w.h.p., the number $I_s$ of incorrect successful agents in $S$ is at most $\frac{n}{2}(1/2-2.5\delta_i)$. It follows that the total number of incorrect agents (including both successful and unsuccessful ones) is w.h.p, at most 
$$\frac{n}{2}(1/2-0.9\delta_i))~+~\frac{n}{2}(1/2-2.5\delta_i)~=~n(1/2-1.7\delta_i).$$
In other words, the fraction of correct agents at the end of the phase is, w.h.p., at least $1/2+1.7\delta_i,$ as desired.
%
%
%
\end{proof}

Since $\delta_1= \Omega(\sqrt{\log n/n})$, where the constant factor hiding in the $\Omega$ notation is as large as we want, Lemma \ref{lem:phase} implies the following corollary.

\begin{corollary}\label{for:first}
After  the first $k=\Theta(\log (\sqrt{n/\log n}))$ pha\-ses,  w.h.p., the fraction of correct agents is at least $1/2 + 1/800$.
\end{corollary}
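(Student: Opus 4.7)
The plan is to apply Lemma~\ref{lem:phase} inductively across the $k$ phases, using the multiplicative growth of the bias to amplify the initial $\delta_1 = \Omega(\sqrt{\log n/n})$ guaranteed by Lemma~\ref{lem:main} up to the constant $1/800$.

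First, I would set up the induction. Let $\delta_i$ denote a lower bound on the bias of correct agents at the beginning of phase $i$, so that by Lemma~\ref{lem:main} we may take $\delta_1 \geq c_0 \sqrt{\log n / n}$ for a constant $c_0$ chosen large enough to satisfy the hypothesis of Lemma~\ref{lem:phase} (we have freedom in $c_0$ because the constant hidden in $\Omega$ in Lemma~\ref{lem:main} can be made as large as desired by enlarging the constants $s, \beta, f$ in Stage~I). Since the bias is non-decreasing under the guarantees of Lemma~\ref{lem:phase}, the hypothesis $\delta_i > c\sqrt{\log n/n}$ remains satisfied in every phase once it holds in phase~$1$.

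Next, I would iterate: as long as $1.7 \delta_i \leq 1/800$, Lemma~\ref{lem:phase} guarantees that, w.h.p., $\delta_{i+1} \geq 1.7 \delta_i$. Hence after $j$ such phases we have $\delta_{j+1} \geq 1.7^j \delta_1 \geq 1.7^j c_0 \sqrt{\log n / n}$. Solving $1.7^j c_0 \sqrt{\log n / n} \geq 1/800$ gives $j = \Theta(\log \sqrt{n/\log n})$, which matches the choice of $k$ in the corollary statement (choosing constants so that $k$ exceeds this threshold by at least one). Once $\delta_i$ reaches $1/800$, the second branch of the minimum in Lemma~\ref{lem:phase} kicks in and guarantees $\delta_{i+1} \geq 1/800$ for the remaining phases, so the property is preserved through phase $k$.

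Finally, I would apply a union bound. Each individual phase's guarantee fails with probability at most $n^{-c'}$ for a sufficiently large constant $c'$, and there are $k = O(\log n)$ phases, so the probability that any one of them deviates from the guaranteed growth is at most $O(\log n) \cdot n^{-c'} = n^{-\Omega(1)}$. Conditioned on all $k$ guarantees holding, the fraction of correct agents at the end of phase $k$ is at least $1/2 + 1/800$, as desired.

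The main subtlety to verify is that the constant $c_0$ from Lemma~\ref{lem:main} can indeed be chosen large enough to uniformly satisfy the hypothesis $\delta_i > c\sqrt{\log n/n}$ of Lemma~\ref{lem:phase} in all phases simultaneously, and that the union bound over $O(\log n)$ w.h.p.\ events yields a statement that is still w.h.p.\ overall; both are straightforward given the freedom in the constants $s, \beta, f, c$ throughout the construction.
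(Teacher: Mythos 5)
Your proof is correct and follows essentially the same route as the paper, which derives the corollary directly by iterating Lemma~\ref{lem:phase} from $\delta_1=\Omega(\sqrt{\log n/n})$ (with the constant chosen large enough to meet the lemma's hypothesis) until the bias caps at $1/800$, combined with the implicit union bound over the $O(\log n)$ phases. Your write-up simply makes explicit the details the paper leaves to the reader, including the observation that the guaranteed lower bound on the bias never falls back below the threshold needed to reapply the lemma.
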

In the final phase, namely phase $k+1$, each agent collects $ O(\frac{1}{\eps^2}\log n)$ independent samples, uniformly at random, from a population whose bias towards the correct opinion is at least $1/400$. Assuming the constant hiding behind the $O$-notation  is sufficiently large, Chernoff's inequality guarantees that, w.h.p., the majority opinion of such samples is correct. Hence, a union bound argument guarantees that w.h.p, all agents are correct at the end of  Stage II.
Let us now analyze the running time of Stage II. Each of the first $k$ phases takes $\gamma=O(1/\eps^2)$ rounds.
Since $k=O(\log n),$ the number of rounds required to perform the first $k$ phases is  $O(\frac{1}{\epsilon^2}\log n)$.
 The running time of phase $k+1$ is $O(\frac{1}{\epsilon^2}\log n)$.
  Altogether, we obtain the following.

\begin{lemma}\label{lem:stagetwo}
Stage II  takes $O(\frac{1}{\epsilon^2}\log n)$ rounds and at the end of the stage all agents are correct, with high probability.
\end{lemma}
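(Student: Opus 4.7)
The plan is to package together the results already established for the first $k$ phases of Stage II with a direct Chernoff analysis of the final phase $k+1$. By Corollary \ref{for:first}, after the first $k=\Theta(\log(\sqrt{n/\log n}))$ phases the fraction of correct agents is, w.h.p., at least $1/2 + 1/800$. Conditioning on this highly likely event, I would then analyze phase $k+1$ separately and argue that it drives the error probability at \emph{every} agent down to $n^{-\Omega(1)}$, so that a single union bound over the $n$ agents finishes the proof.

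For phase $k+1$, I would set $m_{k+1} = C \log n / \eps^2$ for a sufficiently large constant $C$. Two things must hold simultaneously, each w.h.p.\ for every individual agent: (i) the agent is successful, i.e.\ it accumulates at least $m_{k+1}/2$ samples; and (ii) conditional on being successful, the majority of its samples equals $\cO$. For (i), the proof of Claim \ref{claim:number} shows that the number of messages received by a fixed agent in $m_{k+1}$ rounds is dominated from below by a sum of $m_{k+1}$ independent Bernoulli$(1/2)$ variables, so Chernoff gives failure probability $e^{-\Omega(m_{k+1})} = n^{-\Omega(C/\eps^2)}$. For (ii), each sample is drawn from a population with bias at least $1/800$ toward $\cO$; accounting for the channel noise as in Lemma~\ref{lem:second}, each sample is correct (independently, since samples are taken in distinct rounds and with replacement across rounds) with probability at least $1/2 + 2\eps/800 = 1/2 + \Omega(\eps)$. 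Applying Chernoff's inequality to the $\Omega(\log n/\eps^2)$ samples shows that the probability the majority is wrong is at most $e^{-\Omega(\eps^2 \cdot \log n/\eps^2)} = n^{-\Omega(C)}$.

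Combining (i) and (ii) via a union bound over all $n$ agents, taking $C$ large enough that $n \cdot n^{-\Omega(C)} = o(1)$, yields that w.h.p.\ every agent is both successful and ends the phase with the correct opinion $\cO$. For the running time, each of the $k = O(\log n)$ intermediate phases uses $2\gamma = O(1/\eps^2)$ rounds, contributing $O(\log n /\eps^2)$ rounds in total, while the final phase contributes an additional $m_{k+1} = O(\log n/\eps^2)$ rounds, giving the claimed $O(\log n/\eps^2)$ total.

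The one subtlety I expect to need care with is the independence structure used in step (ii): two samples received by the same agent within the same round are \emph{not} mutually independent of each other (both are drawn from the current opinion distribution, but, more importantly, the senders share the round's randomness), so I would either argue via negative correlation (as done elsewhere in Stage II, citing Section \ref{sec:chernoff}) or simply restrict the $m_{k+1}/2$ chosen samples to those received in distinct rounds, which loses only a constant factor in $C$ and yields the required Chernoff concentration without further complication.
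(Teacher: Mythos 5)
Your proposal is correct and follows essentially the same route as the paper: condition on Corollary \ref{for:first}, then show via Chernoff's inequality that in phase $k+1$ each agent's $O(\frac{1}{\eps^2}\log n)$ samples have a correct majority with probability $1-n^{-\Omega(1)}$, and finish with a union bound; the running-time accounting is identical. Your added care about the per-sample bias being only $1/2+\Omega(\eps)$ after channel noise (which is exactly why $\Theta(\log n/\eps^2)$ samples are needed) and about the success/independence structure fills in details the paper's brief argument leaves implicit, and is consistent with it (note, in fact, that since an agent accepts at most one message per round, the within-round dependence you worry about does not arise).
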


Lemmas \ref{lem:main} and  \ref{lem:stagetwo} yield that our algorithm solves the noisy broadcast problem in $O(\frac{1}{\epsilon^2}\log n)$ rounds. Since each message is composed of a single bit, and since in each round, each agent can send at most one message, we  get the bound $O(\frac{1}{\epsilon^2}n\log n)$ on the total number of messages and bits sent. Altogether, we obtain our main result.
\begin{theorem}\label{thm:main-algo}
Consider the fully-synchronous setting and let $\eps$ be such that $1/n^{1/2-\eta}<\eps$, for some arbitrarily small constant $\eta>0$.
The noisy broadcast problem can be solved using $O(\frac{1}{\epsilon^2}\log n)$ rounds, and a total of $O(\frac{1}{\epsilon^2}n\log n)$ messages (or bits).
\end{theorem}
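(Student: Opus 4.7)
The plan is to obtain Theorem~\ref{thm:main-algo} by concatenating the two stages already analyzed and then combining the two lemmas that summarize their performance. Concretely, each agent runs the rule of Stage~I during the first $\beta_s + T\beta + \beta_f = O(\tfrac{1}{\eps^2}\log n)$ rounds, and then switches to the rule of Stage~II for a further $O(\tfrac{1}{\eps^2}\log n)$ rounds. Because we are in the fully-synchronous setting a global clock is available, so every agent knows exactly when Stage~I ends and Stage~II begins; in particular the ``initial opinion'' that Stage~II uses in its first phase is precisely the opinion produced by the last phase of Stage~I.

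The correctness argument then proceeds in two steps. First, Lemma~\ref{lem:main} guarantees that with high probability at the end of Stage~I every agent is activated and the fraction of initially correct agents is at least $1/2+\Omega(\sqrt{\log n/n})$. This is exactly the precondition required by the analysis of Stage~II: the whole population now plays the role of the initial opinionated set $A$, with $|A|=n$ and bias $\delta_1 = \Omega(\sqrt{\log n/|A|})$, with the hidden constant chosen as large as we need. Hence, Lemma~\ref{lem:stagetwo} applies and yields that with high probability all $n$ agents hold the correct opinion $\cO$ at the end of Stage~II. A union bound over the (at most polynomially small) failure probabilities of the two stages keeps the total failure probability at most $1/n^c$ for a sufficiently large constant~$c$, which is what the ``with high probability'' convention of the paper requires.

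For the complexity counts the argument is purely arithmetic. The total round count is the sum of the two bounds $O(\tfrac{1}{\eps^2}\log n)$ from Lemma~\ref{lem:main} and Lemma~\ref{lem:stagetwo}, which is again $O(\tfrac{1}{\eps^2}\log n)$. For the total communication, we use the fact that in the Flip model every agent transmits at most one (single-bit) message per round, so the total bit and message complexity is bounded by $n$ times the number of rounds, giving $O(\tfrac{1}{\eps^2}\, n\log n)$. The condition $\eps > 1/n^{1/2-\eta}$ is needed only to ensure that the various Chernoff bounds invoked inside the two lemmas (in particular, the bound on $\delta_1 = \Omega(\sqrt{\log n/n})$ used to feed Stage~II) are non-trivial; under this hypothesis nothing extra is required.

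I do not anticipate any real obstacle: the theorem is essentially a packaging statement. The only point that requires a moment of care is verifying that Stage~II's hypotheses really are met by the high-probability output of Stage~I (activation of every agent and the $\Omega(\sqrt{\log n/n})$ bias), and that the two high-probability events are combined by a simple union bound rather than requiring any further independence argument.
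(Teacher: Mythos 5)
Your proposal is correct and matches the paper's own argument: Theorem~\ref{thm:main-algo} is obtained exactly by combining Lemma~\ref{lem:main} and Lemma~\ref{lem:stagetwo} and then bounding the message/bit complexity by $n$ messages of one bit per round times the $O(\frac{1}{\eps^2}\log n)$ round bound. No differences worth noting.
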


\begin{corollary}\label{cor:major-synch}
Consider the fully-synchronous setting and let $\eps$ be such that $1/n^{1/2-\eta}<\eps$, for some arbitrarily small constant $\eta>0$.
Consider the noisy majority-consensus  problem with an initial set $A$ of at least $\Omega(\frac{1}{\eps^2}\log n)$ agents and majority-bias of $\Omega(\sqrt{\log n/|A|})$. This problem can be solved in $O(\frac{1}{\epsilon^2}\log n)$ rounds, and using a total of $O(\frac{1}{\epsilon^2}n\log n)$ messages (or bits).
\end{corollary}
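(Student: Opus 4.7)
The plan is to adapt the algorithm of Theorem~\ref{thm:main-algo} by skipping phase~$0$ of Stage~I, since the hypothesis already supplies the kind of initial configuration that phase~$0$ is designed to produce. Concretely, every agent $a \in A$ begins broadcasting its input opinion from round $0$ onwards, while every agent outside $A$ follows the Stage~I rule unchanged (wait until the current phase ends, set the initial opinion to a uniformly random message received in that phase, then broadcast from the next phase onward). We run phases $1,\ldots,T'$ of length $\beta$ followed by a final phase of length $\beta_f$, where $T' := \lfloor \log(n/(2|A|))/\log(\beta+1)\rfloor$ plays the role of $T$ with $|A|$ replacing $\beta_s$, and then execute Stage~II exactly as described.

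The analysis re-runs Lemma~\ref{lem:main} with $|A|$ and $\delta$ in place of $X_0$ and $\eps_0$. The set-size argument of Claim~\ref{claim:Xi} is oblivious to how the initial activated set was formed, so it gives $(\beta+1)^i|A|/16 \leq X_i \leq (\beta+1)^i|A|$ for all $i \leq T'$, hence (using $\beta+1 \geq c/\eps^2$ and $|A| \geq c\log n/\eps^2$) $X_{T'} = \Omega(\eps^2 n)$, and Corollary~\ref{cor:activated} guarantees that the length-$\beta_f$ final phase activates all remaining agents. Corollary~\ref{cor:Yi} similarly yields $Y_i \geq \beta^{i-1}|A|/O(1)$.

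The bias analysis is the step that requires the most care, because the invariant of Claim~\ref{claim:grow}, $\eps_i \geq \eps^{i+1}/2$, was tied to the concrete initial bias $\eps/2$ produced by phase~$0$; here the initial bias $\delta$ can be much smaller. The right substitute is the scale-free invariant that after each phase the bias is at least $\eps$ times the previous bias, which follows from a Chernoff bound with slack $\eps \eps_{i-1}$ provided $(\eps \eps_{i-1})^2 Y_i = \Omega(\log n)$. This in turn reduces to tracking the quantity $\eps_{i-1}^2 X_{i-1}$: it holds initially by the hypothesis $\delta^2|A| = \Omega(\log n)$, and at each step it is multiplied by $\Theta(\eps^2 \beta) \gg 1$, so it grows geometrically and the Chernoff step goes through w.h.p. throughout all $T'+1$ phases. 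At the end, all $n$ agents are activated and the bias is at least $\delta \cdot \eps^{T'+1}$; the same algebraic calculation performed at the end of the proof of Lemma~\ref{lem:main} (substituting $|A|$ for $\beta_s$) shows this is $\Omega(\sqrt{\log n/n})$, which is precisely what Stage~II requires.

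Once the modified Stage~I has delivered all agents with this bias, Stage~II and Lemma~\ref{lem:stagetwo} apply verbatim and boost the bias to $1$ in $O(\log n/\eps^2)$ further rounds, yielding a total of $O(\log n/\eps^2)$ rounds and $O(n\log n/\eps^2)$ single-bit messages. The main obstacle in the above plan, as flagged, is reworking the bias invariant: one must resist the temptation to apply Claim~\ref{claim:grow} directly and instead restate it in the scale-free form $\eps_i \geq \eps \cdot \eps_{i-1}$, verifying that the Chernoff slack remains polynomially comfortable in every phase precisely because the hypothesis $\delta = \Omega(\sqrt{\log n/|A|})$ provides the initial $\Omega(\log n)$ slack.
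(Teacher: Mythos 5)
Your proposal is correct and follows essentially the same route as the paper: the paper's own proof also plugs the instance $A$ into the Stage~I pipeline at the phase whose activated-set size matches $|A|$ (it sets $i_A=\frac{\log(|A|/\log n)}{2\log(1/\eps)}$ and runs phases $i_A,\ldots,T+1$ followed by Stage~II, appealing to Claims~\ref{claim:Xi} and~\ref{claim:grow} for the intermediate invariants). Your re-indexing of $A$ as layer~$0$ and your scale-free restatement of the bias invariant ($\eps_i\geq\eps\,\eps_{i-1}$, with the Chernoff slack controlled by the growing quantity $\eps_{i-1}^2X_{i-1}$, seeded by $\delta^2|A|=\Omega(\log n)$) is just a more explicit way of carrying out the same argument.
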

\begin{proof}
 Recall  that Claim \ref{claim-phase0} implies that after phase 0 is completed, we are left with solving the noisy majority-consensus  problem with an initial set $A_0$ of agents of size $|A_0|=\Theta(\frac{1}{\eps^2}\log n)$ whose majority-bias is $\Omega(\sqrt{\log n/|A_0|})$.
 As we saw, this problem is solved by applying the remaining phases $i=1,\ldots T+1$ of Stage I, and then applying Stage~II. Specifically, as given by Claims \ref{claim:grow} and \ref{claim:Xi}, phase $i$  of Stage~I, for each $i\in\{1,\ldots T+1\}$,  reduced the problem to  the noisy majority-consensus  problem with an initial set $A_i$ of size $A_i =\Theta \left(\frac{1}{\eps^{2i+2}} \log n\right)$
 and majority-bias of $\Omega(\sqrt{\log n/|A_i|})$. Hence, after applying Stage I, we were left with dealing with the noisy majority-consensus  problem with an initial set $X$ of agents composed of all $n$ agents and majority-bias of $\Omega(\sqrt{\log n/n})$. Solving this latter problem is precisely the objective of Stage~II.

 In light of this, the general case of the noisy majority-consensus  problem can be solved as follows. Recall, in this problem we consider an initial subset $A$ of agents of size $|A|=\Omega(\frac{1}{\eps^2}\log n)$ and majority-bias of $\Omega(\sqrt{\log n/|A|})$. To solve this problem, we first set: $$i_A:=\frac{\log ({|A|}/{\log n})}{2\log (1/\eps)}~, $$ and then execute  phases $i_A,i_{A+1}\ldots T+1$ of Stage I, and subsequently execute Stage II.
\end{proof}


\section{Removing the global clock assumption}\label{removing-global}
In the previous section we considered the fully-synchronous setting where all clocks are set to zero at the beginning of the execution.
In this section we show how to remove this global-clock assumption, considering the more standard synchronous setting in which the clock of an agent is set to zero when it receives a message for the first time (the clock of the initiator is set to zero when the execution starts). The removal of this assumption will yield an additive increase of $O(\log^2 n)$ in the running time, while preserving the optimal $O(\frac{1}{\eps^2}n\log n)$ message complexity.

Before completely removing the assumption of a global clock, let us first consider a relaxed version of it where it is guaranteed that at the beginning of the execution, each clock is initialized to some integer in the range $[0,D)$, for a given $D$. (In particular, any two clocks are at most $D$ apart.) Recall that the algorithm mentioned in Section \ref{SEC:GLOBAL} for the fully-synchronous setting considers $O(\log n)$ consecutive phases, where phase $i$ takes place during the time period $[r_i,r_i+x_i)$, for some integers $r_i$ and $x_i$ (here $x_i$ is the length of phase $i$, and we have $r_{i+1}=r_i+x_i$). In Section~\ref{SEC:GLOBAL}, assuming the global clock assumption, it is guaranteed that all agents execute the same phase at the same time. We now modify that algorithm to fit to the relax setting where all clocks are initialized to a value in the range $[0,D)$.

\subsection{A modified algorithm assuming all clocks differ by at most $D$}
In the modified algorithm, each agent will execute phase $i$ as described in Section \ref{SEC:GLOBAL}, except that instead of starting it at time $r_i$ it will start it when its own clock shows  $r_i+i D$. That is, Agent $a$ will execute phase~$i$ during the time interval when its own clock shows $[r_i+i D, r_i+i D+x_i)$. Let $s$ (respectively, $\ell$) be the smallest (and respectively, the largest)  value in $[0,D)$ such that an agent (active in phase $i$) started the execution with this time on its clock. For the sake of the analysis, assume we start the execution at the {\em  global} time 0 (in this time, all local clocks are in the range $[0,D)$). Each agent~$a$ will start phase~$i$ at some global time, not before  the  time $s+r_i+i D\geq r_i+i D$, and will end it before time $\ell+r_i+i D+x_i< r_{i+1}+(i+1)D$. Hence, all agents will execute phase $i$ during the global time interval $[r_i+i D, r_{i+1}+(i+1) D)$. Note that these intervals are disjoint for different values of $i$. 

\paragraph{Correctness.}
To show that the modified algorithm is correct we compare an execution of this algorithm to an execution of the fully-synchronized algorithm operating under the  global clock assumption (as described and analyzed in Section~\ref{SEC:GLOBAL}). We assume that  the same random choices are made by the message scheduler in both executions. That is, if under the fully-synchronized algorithm, the $k$'th message that an agent $a$ sent was to agent $b$, then also in the modified algorithm, the $k$'th message sent by agent $a$ was to agent $b$ (note that the timing of this message delivery and its content may potentially differ between executions). 

Consider an agent $a$ and a phase $i$ of its algorithm. Recall that in both executions, all messages sent by agent $a$ during  phase $i$ are the same, essentially containing its  opinion in the beginning of the phase (if it had any, otherwise, it does not sent any messages anyways). Therefore if 
 the opinions of all agents in the beginning of their phase $i$ are the same, respectively, in both executions, then the contents of the messages sent by an agent in that phase are also the same, respectively, in both executions. Hence, the set of messages (and their contents) received by any agent $a$ in phase $i$ is the same in both executions. Note, however, that the order in which these messages are received by the agent may differ between the executions. These messages will be used by the agent to determine its opinion at the end of the phase. We next argue that the fact that these messages may arrive at a different order does not impact the decision made by the agent at the end of that phase. This will imply, by induction on the phase numbers, that the two executions are essentially the~same. 
 
Observe that the decisions made by an agent at the end of a phase (for setting or modifying its opinion) are based on the messages it has received in that phase, but are invariant of the order in which they were received (see also Remarks \ref{remark1} and \ref{remark2}). 
Indeed, let $S$ be the set of messages that agent $a$ received during phase $i$.
At the end of the phase,  agent $a$ first selects a subset of $S$ of a certain size (this size could be 0,1, or larger), chosen uniformly at random among the subsets of $S$ of this given size, and then sets its opinion to be the majority opinion in that subset\footnote{Specifically, in Stage 1, an agent activated in phase $i$, chooses a single message uniformly at random among the messages it has received in phase $i$ and sets its initial opinion to the content of that message. In Stage 2, at the end of each phase $i$, a successful agent selects a subset of samples of size $m_i/2$ uniformly at random among the set of samples it has received in that phase, and then update its opinion to be the majority opinion in that subset.}. 
This implies that there exists a bijective mapping $\sigma_i$ between the sequences of random choices made by the agents in 
 the modified algorithm  in phase $i$ and the sequences of random choices made by the agents in 
 the fully-synchronized algorithm in phase $i$, such that the same subsets of messages are being chosen by all agents at the end of  phase $i$, respectively. (Thus  $\sigma_i$, takes into account the different orders in which messages arrive to an agent, for every agent, in the two executions.) This implies that if the opinions of all agents are the same in both executions in the beginning of phase $i$, then under $\sigma_i$, the opinions of all agents are the same at the end of the phase, in both executions. 
It follows by induction on the phase numbers that there exists a bijective mapping $\sigma:=\sigma_0\circ\sigma_1\circ\cdots,\sigma_i\circ\cdots$ between the sequences of random choices made by the agents in 
 the modified algorithm throughout the execution and the sequences of random choices made by the agents in 
 the fully-synchronous algorithm throughout the execution, such that the final opinion of each agent is the same in both executions. The correctness guarantee of the fully-synchronous algorithm therefore implies that for the modified algorithm, w.h.p., all agents output the correct opinion at the end of the execution.

\paragraph{Complexities.}
Since the number of phases is $O(\log n)$, we immediately have that the increase in number of rounds is an additive term of  $O(D\log n)$ rounds. On the other hand, the message complexity remains the same as in the fully-synchronous case, since we only add waiting rounds over the original fully-synchronous algorithm.

\subsection{Removing the assumption that clocks are $D$ apart}
We now claim that  if $D$ is initially unbounded, we can easily (and quickly) reduce it to $D=2\log n$, by first performing an {\em activation} phase, in which each informed agent broadcasts an arbitrary message for $2\log n$ rounds, and resetting the clock of an agent to be 0 after $4\log n$ rounds passed since it heard a message for the first time. W.h.p., after $2\log n$ rounds all agents have been activated, ensuring that when the clocks are initialized again, all clocks are at most $2\log n$ apart. Furthermore, note that the messages used in this  activation phase all reach their destination within $4\log n$ rounds (at most $2\log n$ rounds until the agent sending the last message was activated and plus at most $2\log n$ rounds until this agent sent its last activation message). 
Hence, by the time the earliest agent resets its clock to 0, all messages corresponding to the activation phase have reached there destination. This enables us to  safely proceed with the simulation above, assuming $D=2\log n$. Hence, we obtain the following.
\begin{theorem}\label{thm:rumor-asy}
Consider the synchronous setting.
There exists algorithms solving the noisy broadcast problem
and the noisy majority-consensus  problem (with an initial set of agents $A$ of size $|A|=\Omega(\frac{1}{\eps^2}\log n)$ and majority-bias of $\Omega(\sqrt{\log n/|A|})$). Both these algorithms terminate in  $O(\frac{1}{\eps^2}\log n+\log^2 n)$ rounds, and use 
$O(\frac{1}{\epsilon^2}n\log n)$ messages.
\end{theorem}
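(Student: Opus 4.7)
My plan is to reduce the synchronous setting to the fully-synchronous one of Theorem \ref{thm:main-algo} by a two-step simulation: first I bound the skew between clocks, and then I insert phase-dependent padding so that globally all agents execute each phase within disjoint time windows.

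First I would run a short \emph{activation phase}. Every agent that has received at least one message broadcasts arbitrary bits for $2\log n$ consecutive rounds starting from its local time $0$, and resets its local clock to $0$ exactly $4\log n$ rounds after the first message it receives. Since the source broadcasts from the start, a standard push-gossip coupon-collector argument shows that after $2\log n$ rounds all $n$ agents are activated w.h.p. Consequently, when the last agent resets its clock, every other agent has already reset its own clock, and any two resets differ by at most $D := 2\log n$ rounds. Moreover, all messages pertaining to the activation phase reach their destinations within $4\log n$ global rounds, so they do not interfere with subsequent phases after each agent resets. This step contributes $O(\log n)$ rounds and $O(n\log n)$ messages, both within the claimed budgets.

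Next I would run the fully-synchronous algorithm of Section \ref{SEC:GLOBAL} with the following time shift: if the original algorithm scheduled phase $i$ during $[r_i, r_i+x_i)$, then each agent now executes phase $i$ during its local time window $[r_i + iD,\, r_i + iD + x_i)$. Because local clocks differ by at most $D$, the global time windows in which different agents execute phase $i$ are all contained in $[r_i+iD,\, r_{i+1}+(i+1)D)$, and these windows are disjoint across $i$. Thus no message meant for phase $i$ can ever be received by an agent that is currently in phase $j\neq i$.

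The main obstacle is correctness: even though the sets of phase-$i$ messages received by each agent are identical to those in a fully-synchronous execution (under the natural coupling in which the $k$-th message sent by agent $a$ goes to the same recipient in both executions), the \emph{order} in which a receiver gets these messages may differ because different senders start phase $i$ at different global times. To handle this, I would exploit the invariance property flagged in Remarks \ref{remark1} and \ref{remark2}: in both Stage~I and Stage~II, each agent's end-of-phase update depends only on a uniformly chosen subset of a specified size from the multiset of messages it received during the phase, not on their order. Hence I can define a bijection $\sigma_i$ between the agents' random choices at the end of phase $i$ in the two executions so that the chosen subsets (and thus the updated opinions) coincide; composing these bijections gives a measure-preserving coupling under which the final opinions of all agents are identical in both executions. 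The high-probability correctness guarantee of Theorem~\ref{thm:main-algo} therefore transfers verbatim.

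Finally, the complexity follows immediately: the padding adds $iD$ delay before phase $i$, summed over the $O(\log n)$ phases this yields an additive $O(D\log n)=O(\log^2 n)$ rounds, while the per-phase round budgets of Section \ref{SEC:GLOBAL} contribute $O(\frac{1}{\eps^2}\log n)$. Since the padding inserts only silent rounds, the message and bit complexity stays at $O(\frac{1}{\eps^2}n\log n)$, and the same reduction applied to Corollary \ref{cor:major-synch} yields the majority-consensus bound.
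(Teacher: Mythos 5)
Your proposal is correct and follows essentially the same route as the paper: an activation phase that bounds the clock skew by $D=2\log n$, followed by the phase-padding simulation with the order-invariance/bijection coupling argument and the $O(D\log n)=O(\log^2 n)$ additive overhead. The only difference is presentational (you describe the activation phase before the $D$-bounded simulation, while the paper does the reverse), which does not change the argument.
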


The  term  $O(\log^2 n)$ added to the running time in both algorithms in Theorem \ref{thm:rumor-asy} can be reduced if  agents could quickly synchronize their clocks by a smaller  factor than $O(\log n)$. Optimizing this clock-gap between agents remains an intriguing question of independent interest.

\section{Discussion}
This paper is a first attempt to study the impact of communication noise on information dissemination problems, using a computational approach. We have presented the Flip model, a basic model of communication wherein interactions are conveyed across noisy channels of limited capacity.  We have then presented  robust and simple algorithms that efficiently solve two basic information dissemination problems  within the model's constraints.  Our algorithms suggest  balancing between silence and transmission, synchronization, and majority-based decisions as important ingredients  towards understanding collective behavior in anonymous and noisy populations.

Our  algorithms rely on synchronization. Although it is not realistic to assume that biological ensembles are highly synchronous, some degree of synchronicity may still exist \cite{CorticalSongs,Fireflies}. (For example, agents could potentially differentiate large enough windows of time considering each such window as a round.)  An intriguing question left for future work can be to quantify the minimal degree of synchronisation required for solving the information dissemination problems efficiently. 

As this is a first attempt at analyzing randomly distorted messages with distributed computing tools, we did not attempt to describe a specific biological system or identify naturally occurring algorithms. Rather, our results indicate that to understand  natural  systems one must simultaneously consider
 communication noise, limited messaging alphabet, and algorithm. Typically, works in different fields take only subset of these three components into account.

%
%

\bigskip
\noindent{\bf Acknowledgments:} The authors would like to thank Oded Goldreich, Kunal Talwar, James Aspnes, and George Giakkoupis for helpful discussions.\\


\begin{thebibliography}{10}

\small
\vspace{0.1cm}

\bibitem{beeping1}
 Yehuda Afek, Noga Alon, Omer Barad, Eran Hornstein, Naama Barkai, and Ziv Bar-Joseph.
\newblock {\em A biological solution to a fundamental distributed computing problem.}
\newblock   Science 331:6014, 183--5, (2011).

\bibitem{beeping2}
Yehuda Afek, Noga Alon, Ziv Bar-Joseph, Alejandro Cornejo, Bernhard Haeupler, Fabian Kuhn.
\newblock {\em Beeping a maximal independent set.}
\newblock  Distributed Computing 26(4), 195--208, (2013).

\bibitem{Synapses}
Christina Allen and Charles Stevens.
\newblock {\em An evaluation of causes for unreliability of synaptic transmission.}
\newblock Proceedings of the National Academy of Sciences 91, 10380--10383, (1994).

\bibitem{Pop1}
Dana Angluin, James Aspnes, Zo\"e Diamadi, Michael J. Fischer, Ren\'e Peralta.
\newblock {\em Computation in networks of passively mobile finite-state sensors.}
\newblock  Distributed Computing 18(4), 235--253, (2006).

\bibitem{Pop-ss}
Dana Angluin, James Aspnes, Michael J. Fischer, and Hong Jiang.
\newblock {\em Self-stabilizing population protocols.}
\newblock  TAAS 3(4), (2008).

\bibitem{Aspnes}
Dana Angluin, James Aspnes, and David Eisenstat.
\newblock {\em A simple population protocol for fast robust approximate majority.}
\newblock Distributed Computing 21(2), 87--102, (2008).

\bibitem{Pop2}
James Aspnes and Eric Ruppert.
\newblock {\em An Introduction to Population Protocols.}
\newblock  Bulletin of the EATCS 93, 98--117, (2007).

\bibitem{QurumSensing}
Bonnie L. Bassler and Christopher M. Waters.
\newblock {\em Quorum Sensing: Cell-to-Cell Communication in Bacteria.}
\newblock  The Annual Review of Cell and Developmental Biology 21, 319--346, (2005).

\bibitem{kutten}
Reuven Bar-Yehuda and Shay Kutten.
\newblock {\em Fault Tolerant Distributed Majority Commitment.}
\newblock  J. Algorithms 9(4), 568--582 (1988).

\bibitem{Joffroy}
Joffroy Beauquier, Janna Burman, and Shay Kutten.
\newblock {\em Making Population Protocols Self-stabilizing.}
\newblock  SSS 2009: 90--104.

\bibitem{Becchetti}
Luca Becchetti, Andrea E. F. Clementi, Emanuele Natale, Francesco Pasquale, Riccardo Silvestri, and Luca Trevisan. 
\newblock {\em  Simple dynamics for plurality consensus.}
 SPAA, 247-256, 2014.

\bibitem{Dolev}
Ohad Ben-Shahar, Shlomi Dolev, Andrey Dolgin, and Michael Segal.
\newblock {\em Direction election in flocking swarms.}
\newblock Ad Hoc Networks 12, 250--258 (2014)

\bibitem{Fireflies}
John Buck.
\newblock {\em Synchronous Rhythmic Flashing of Fireflies. II.}
\newblock  The Quarterly Review of Biology 63(3), 265--289, (1988).

\bibitem{AcousticNoise}
Henrik Brumm and Hans Slabbekoorn.
\newblock {\em Acoustic Communication in Noise.}
\newblock Advances in the Study of Behavior 35, 151--209, (2005).

\bibitem{Complement}
Michael C. Carrol.
\newblock {\em The complement system in regulation of adaptive immunity.}
\newblock  Nature Immunology 5, 981--986, (2004).

\bibitem{CHHKM12}
Keren Censor-Hillel, Bernhard Haeupler, Jonathan A. Kelner, and Petar Maymounkov.
\newblock {\em  Global computation in a poorly connected world: fast rumor-spreading with no dependence on conductance}.
\newblock Proc. of the ACM Symposium on Theory of Computing (STOC), 961--970, (2012).

\bibitem{chazelle}
Bernard Chazelle.
\newblock {\em Natural algorithms.}
\newblock Proc. of the ACM-SIAM Symposium on Discrete Algorithms (SODA), 422--431, (2009).


\bibitem{Cooper}
Colin Cooper, Robert Els\:asser, and Tomasz Radzik.
\newblock {\em The Power of Two Choices in Distributed Voting.}
\newblock ICALP (2), 435-446, 2014.

\bibitem{DGH+88}
Alan J. Demers, Daniel H. Greene, Carl Hauser, Wes Irish, John Larson, Scott Shenker, Howard E. Sturgis, Daniel C. Swinehart, and Douglas B. Terry.
\newblock {\em  Epidemic algorithms for replicated database maintenance}.
\newblock Operating Systems Review 22 no. 1, 8--32, (1988).


\bibitem{dubhashi1998balls}
Dubhashi Devdatt and Ranjan Desh.
\newblock {\em Balls and bins: A study in negative dependence.}
\newblock Random Structures and Algorithms, {13(2)}, 99--124, 1998.


\bibitem{DP00}
Krzysztof Diks and Andrzej Pelc.
\newblock {\em Optimal adaptive broadcasting with a bounded fraction of faulty nodes}.
\newblock Algorithmica 28 no. 1, 37--50, (2000).

\bibitem{Doerr}
Benjamin Doerr, Leslie Ann Goldberg, Lorenz Minder, Thomas Sauerwald, and Christian Scheideler.
\newblock {\em  Stabilizing consensus with the power of two choices.}
\newblock SPAA, 149-158, 2011.

\bibitem{DF11a}
Benjamin Doerr and Mahmoud Fouz.
\newblock {\em  Asymptotically optimal rumor-spreading}.
\newblock Proc. of the International Colloquium on Automata, Languages, and Programming (ICALP), 502--513, (2011).


\bibitem{ES09}
Robert Els\"asser and Thomas Sauerwald.
\newblock {\em On the runtime and robustness of randomized broadcasting}.
\newblock Theoretical Computer Science 410 no. 36, 3414-- 3427, (2009).

\bibitem{emek}
Yuval Emek and Roger Wattenhofer.
\newblock {\em Stone age distributed computing.}
\newblock Proc. of the ACM SIGACT-SIGOPS Symposium on Principles of Distributed Computing (PODC), 137--146, (2013).



\bibitem{Tregs}
Ofer Feinerman, Garrit Jentsch, Karen Tkach, Jesse Coward, Matthew Hathorn, Michael Sneddon, Thierry Emonet, Kendall Smith, and Gregoire Altan-Bonnet.
\newblock {\em Single-cell quantification of IL-2 response by effector and regulatory T cells reveals critical plasticity in immune response.}
\newblock Molecular systems biology 6, 437pp, (2010).

\bibitem{FK12}
Ofer Feinerman and Amos Korman.
\newblock {\em Memory Lower Bounds for Randomized Collaborative Search and Implications for Biology.} 
\newblock  In {\em Proc. 26th International Symposium on Distributed Computing  (DISC)},  61--75, 2012.

\bibitem{FKLS12}
Ofer Feinerman, Amos Korman, Zvi Lotker, and Jean-S\'ebastien Sereni.
\newblock {\em Collaborative search on the plane without communication.}
\newblock In {\em Proc. of the 31st
ACM Symp. on Principles of Distributed Computing (PODC)},  77--86, 2012.


\bibitem{TCRnoise}
Ofer Feinerman, Joel Veiga, Jeffrey Dorfman, Ronald Germain, and Gregoire Altan-Bonnet.
\newblock {\em Variability and robustness in T cell activation from regulated heterogeneity in protein levels.}
\newblock Science 321(5892), 1081--1084, (2008).


\bibitem{FG10}
Pierre Fraigniaud, and George Giakkoupis.
\newblock {\em  On the bit communication complexity of randomized rumor-spreading}.
\newblock Proc. of the ACM Symposium on Parallelism in Algorithms and Architectures (SPAA),  134-143, (2010).

\bibitem{HouseHunt}
Nigel R Franks, Stephen C Pratt, Eamonn B Mallon, Nicholas F Britton and David JT Sumpter.
\newblock {\em Information flow, opinion polling and collective intelligence in house�hunting social insects.}
\newblock The Royal Society B 537(1427), 1567--1583, (2002).



\bibitem{NetworkIT}
Abbas El Gamal and Young-Han Kim.
\newblock {\em Network Information Theory.}
\newblock Cambridge University Press, 709pp, (2012).

\bibitem{gossip2}
Chryssis Georgiou, Seth Gilbert, and Dariusz R. Kowalski.
\newblock {\em Meeting the deadline: on the complexity of fault-tolerant continuous gossip.}
\newblock  Distributed Computing 24(5), 223--244 (2011).

\bibitem{gossip1}
Chryssis Georgiou, Seth Gilbert, Rachid Guerraoui, and Dariusz R. Kowalski.
 \newblock {\em Asynchronous gossip. }
\newblock J. ACM 60(2), 11 (2013).

\bibitem{GP96}
Leszek Gasieniec and Andrzej Pelc.
\newblock {\em Adaptive broadcasting with faulty nodes}.
\newblock Parallel Computing 22 no. 6, 903--912, (1996).

\bibitem{GS12}
George Giakkoupis and Thomas Sauerwald.
\newblock {\em rumor-spreading and vertex expansion.}
\newblock Proc. of the ACM-SIAM Symposium on Discrete Algorithms (SODA), 1623--1641, (2012).

\bibitem{H11}
Bernhard Haeupler.
\newblock {\em Analyzing Network Coding Gossip Made Easy.}
\newblock Proc. of the ACM Symposium on Theory of Computing (STOC), 293--302, (2011).

\bibitem{H13}
Bernhard Haeupler.
\newblock {\em  Simple, Fast, and Deterministic Gossip and rumor-spreading.}
\newblock Proc. of the ACM-SIAM Symposium on Discrete Algorithms (SODA), 705--716, (2013).

\bibitem{HM14}
Bernhard Haeupler and Dahlia Malkhi.
\newblock {\em  Optimal Gossip with Direct Addressing.}
\newblock Proc. of the ACM SIGACT-SIGOPS Symposium on Principles of Distributed Computing (PODC), 176--185 , (2014).

\bibitem{CorticalSongs}
Yuji Ikegaya, Gloster Aaron, Rosa Cossart, Dmitriy Aronov, Ilan Lampl, David Ferster, and Rafael Yuste.
\newblock {\em Synfire Chains and Cortical Songs: Temporal Modules of Cortical Activity.}
\newblock Science 304(5670), 559--564, (2004).


\bibitem{joag1983negative}
Kumar Joag-Dev and Frank Proschan.
\newblock  {\em Negative association of random variables with applications.}
\newblock  The Annals of Statistics, 286--295, 1983.

\bibitem{KM09}
Soummya Kar and Jose MF Moura.
\newblock {\em Distributed Consensus Algorithms in Sensor Networks With Imperfect Communication: Link Failures and Channel Noise}
\newblock IEEE Transactions on Signal Processing 57(1), 355--369, (2009).

 
\bibitem{KSSV00}
Richard M. Karp, Christian Schindelhauer, Scott Shenker, and Berthold V\"ocking.
\newblock {\em Randomized rumor-spreading}.
\newblock Proc. of the IEEE Symposium on Foundations of Computer Science (FOCS), 565--574, (2000)

\bibitem{KK08}
Ralf Koetter and Frank R. Kschischang.
\newblock {\em Coding for errors and erasures in random network coding}.
\newblock IEEE Transactions on Information Theory 54 no. 8, 3579--3591, (2008)

\bibitem{Penguins}
Thierry Lengagne, Thierry Aubin, Jacques Lauga, and Pierre J.Jouventin.
\newblock {\em How do king penguins (Aptenodytes patagonicus) apply the mathematical theory of information to communicate in windy conditions?}
\newblock Proc. of the Royal Society B: Biological Sciences 266(1429), 1623--1628, (1999).

\bibitem{Mertzios}
George B. Mertzios, Sotiris E. Nikoletseas, Christoforos Raptopoulos, and Paul G. Spirakis.
\newblock {\em Natural models for evolution on networks.}
\newblock  Theor. Comput. Sci. 477, 76--95 (2013).

\bibitem{spirakis}
Othon Michail, Ioannis Chatzigiannakis, and Paul G. Spirakis.
\newblock {\em Mediated population protocols. Theor. Comput. Sci.}
\newblock 412(22), 2434--2450 (2011).

\bibitem{Epidemics}
Mauro Mobilia, A. Petersen, and Sidney Redner.
\newblock {\em Spread of epidemic disease on networks.}
\newblock Physical Review E 66(16128), (2002).

\bibitem{VoterModel1}
Mauro Mobilia.
\newblock {\em Does A Single Zealot Affect an Infinite Group of Voters}
\newblock Physical Review Letters 91(028701), (2003)

\bibitem{VoterModel2}
Mauro Mobilia, A. Petersen, and Sidney Redner.
\newblock {\em On the role of zealotry in the voter model.}
\newblock Journal of Statistical Mechanics, P08029, (2007).

\bibitem{Codes}
Todd Moon.
\newblock {\em Error Correction Coding: Mathematical Methods and Algorithms.}
\newblock Wiley-Interscience, 800pp, (2005).

\bibitem{Panconesi}
Alessandro Panconesi and Aravind Srinivasan.
\newblock {\em Randomized distributed edge coloring via an extension of the Chernoff-Hoeffding bounds.} 
\newblock SIAM J. Comput. 26, pp. 350--368, 1997.

\bibitem{Peleg:book}
David Peleg.
\newblock {\em Distributed Computing: A Locality-Sensitive Approach}.
\newblock SIAM, (2000).

\bibitem{Pit87}
Boris Pittel.
\newblock {\em On spreading a rumor.}
\newblock SIAM Journal on Applied Mathematics 47 no. 1, 213--223, (1987).

\bibitem{noisyAnts}
Nitzan Razin, Jean-pierre Eckmann, and Ofer Feinerman.
\newblock {\em Desert ants achieve reliable recruitment across noisy interactions.}
\newblock Journal of the Royal Society Interface 10(20130079), (2013).


\bibitem{ManyEyes}
Gilbert Roberts.
\newblock {\em Why individual vigilance increases as group size increases}.
\newblock Animal Behaviour 51, 1077--1086, (1996).

\bibitem{Shannon}
Claude Shannon.
\newblock {\em A Mathematical Theory of Communication}.
\newblock Bell System Technical Journal 27(3), 379--423, (1948).

\bibitem{FishConsensus}
David JT Sumpter, Jens Krause, Richard James, Iain D. Couzin, and Ashley JW Ward.
\newblock {\em Consensus Decision Making by Fish.}
\newblock Current Biology 22(25), 1773--1777, (2008).

\bibitem{MHC}
Yuri Sykulev, Michael Joo, Irina Vturina, Theodore J. Tsomides, and Herman N Eisen.
\newblock {\em  Evidence that a Single Peptide-MHC Complex on a Target Cell Can Elicit a Cytolytic T Cell Response}.
\newblock Immunity 4(6), 565--571, (1996).

\bibitem{DHL09}
Benjamin Doerr, Anna Huber, and Ariel Levavi.
\newblock {\em Strong robustness of randomized rumor-spreading protocols}.
\newblock Proc. of the International Symposium on Algorithms and Computation (ISAAC), Springer, 2009, pp. 812--821.

\bibitem{SuperOrganism}
Edward O. Wilson and Bert Holldobler.
\newblock {\em The Superorganism: The Beauty, Elegance, and Strangeness of Insect Societies.}
\newblock W. W. Norton \& Company, 544pp, (2008).

\end{thebibliography}
\end{document}